\documentclass[11pt]{article}
\usepackage[utf8]{inputenc}
\usepackage{geometry}
\usepackage{epsfig}
\usepackage{boxedminipage}
\usepackage{hyperref}
\usepackage{csquotes}

\geometry{a4paper, margin=1in}
\usepackage{amsthm}
\usepackage{amsmath,amsfonts,mathtools,amssymb}
\newtheorem{theorem}{Theorem}
\newtheorem{clm}{Claim}
\newtheorem{lemma}{Lemma}
\newtheorem{definition}{Definition}

\def\Remark{\par\noindent{\sl Remark\/}:\enspace}

\def\NP{\mbox{\bf NP}}
\def\P{\mbox{\bf P}}

\def\opt{\mbox{\tt OPT}}

\newcommand{\cut}[1]{\mbox{$(#1,\overline{#1})$}}
\newcommand{\abs}[1]{\mbox{$\left|#1\right|$}}

\newcommand{\set}[1]{\mbox{$\left\{#1\right\}$}}

\newcommand{\ceil}[1]{\mbox{$\lceil#1\rceil$}}
\newcommand{\remove}[1]{}

\newenvironment{lp}[2]{\[\begin{array}{rcll}
                        \mbox{#1} & & #2 \\ 
                        \mbox{subject to}}{\end{array}\]}
\newcommand{\cnstr}[4]{\\ #1 & #2 & #3 & #4}
 
\newcommand{\mcS}{\mathcal{S}} 
 
\newcommand{\mcB}{\mathcal{B}} 
\newcommand{\mcI}{\mathcal{I}} 
\newcommand{\mcC}{\mathcal{C}} 

\begin{document}
\title{Parallel Machine Scheduling to Minimize Energy Consumption}
\author{Antonios Antoniadis\footnote{Saarland University and Max-Planck-Institute for Informatics, Saarland University Campus, Saarbrücken, Germany. Supported by DFG grant AN 1262/1-1.} \and Naveen Garg\footnote{Indian Institute of Technology Delhi. Supported by a K.C. Iyer Chair} \and Gunjan Kumar\footnote{Tata Institute of Fundamental Research, Mumbai} \and Nikhil Kumar\footnote{Indian Institute of Technology Delhi} }
\maketitle
\begin{abstract}
Given $n$ jobs with release dates, deadlines and processing times we consider the problem of scheduling them on $m$ parallel machines so as to minimize the total energy consumed. Machines can enter a sleep state and they consume no energy in this state. Each machine requires $Q$ units of energy to awaken from the sleep state and in its active state the machine can process jobs and consumes a unit of energy per unit time. We allow for preemption and migration of jobs and provide the first constant approximation algorithm for this problem.
\end{abstract}
\thispagestyle{empty}
\clearpage
\setcounter{page}{1}
\section{Introduction}
Energy is an extremely important and scarce resource, and its consumption is progressively becoming a pivotal concern in modern societies. Computing environments account for a large fraction of the global energy consumption and alarmingly, this fraction is growing at a very high rate~\cite{greenpeace}. In response to this, modern hardware increasingly incorporates various energy-saving capabilities and scheduling algorithms need to be designed, not only for time and space considerations, but keeping energy consumption in mind as well.

We focus on one of the most common such power-management techniques called a \emph{power-down mechanism}, which refers to the ability of the processor to transition into a sleep state where it consumes negligible energy. Since \enquote{waking-up} the processor requires a certain amount of energy, there is a trade-off to be had between the energy saved by residing in the sleep state and the energy expended in transitioning back to the active state. Intuitively, one should aim to keep the number of transitions to the sleep states low and once in a sleep state remain in it for as long as possible.

Consider a set of jobs with individual release times, deadlines and processing times, that are to be processed on either a single or a multiprocessor system equipped with a powerdown mechanism. The processor consumes one unit of energy per unit of time when in the active state and no energy when in the sleep state. Transitioning from the sleep state to the active state incurs a fixed energy cost. Preemption and migration of jobs is allowed but no job can be simultaneously processed on more than one machine. The goal is to produce a feasible schedule which consumes the minimum energy (or report that no feasible schedule exists). In Graham's notation, and with $E$ being the appropriate energy function the problems we study can be denoted as $1|r_j;\overline{d}_j;\text{pmtn}|E$ and $m|r_j;\overline{d}_j;\text{pmtn}|E$ respectively.

The problem on a single machine was first stated in~\cite{iraniSGtalg07}, where a greedy $2$-approximation algorithm  called \emph{Left-To-Right} was presented. Roughly speaking, Left-To-Right tries to keep the machine at its current state (active or asleep) for as long as possible. However the computational complexity of the problem remained open and was repeatedly posed as an important open question, in particular because \enquote{many seemingly more complicated problems in this area can be essentially reduced to this problem} (c.f.~\cite{IraniP05}). The complexity question, for the single-machine setting, was eventually settled, initially by Baptiste~\cite{baptiste2006scheduling}, who gave a $O(n^7)$-time algorithm for the case of unit-size jobs and subsequently by Baptiste et. al.~\cite{baptiste2007polynomial} who achieved a running time of $O(n^4)$ for unit sized jobs and $O(n^5)$  when jobs can have arbitrary processing times. Both algorithms are based on a rather involved dynamic programming approach. 

The multiprocessor case turns out to be much more challenging than the single processor one, and obtaining any algorithm for it with a non-trivial performance guarantee has been a major open problem~\cite{baptiste2007polynomial}. It is also an open problem whether the problem is \NP-hard. The difficulty in obtaining a good approximation algorithm seems to arise from two aspects: First, it is not clear how to design a dynamic programming table of polynomial size when the jobs have arbitrary sizes, and a job is not allowed to run parallel to itself. Secondly, structural properties of an optimal schedule can be locally extracted in a single machine environment in contrast to the multi-machine case. As an example, we know that a single machine will be active for at least one time-point within the interval between the release time and the deadline of every job, but the number of active machines at such a time-point in the multiprocessor setting could range from just one to all available machines. As a result, there has been only one previous result with for the multiprocessor setting;  by Demaine et al.~\cite{demaine2007scheduling} who extended the dynamic program of Baptiste~\cite{baptiste2006scheduling} and showed an $O(n^7m^5)$-time algorithm for the special case of unit-size jobs and $m\ge 1$ machines.

\subsection{Our Contribution}

In Section~\ref{sec:single} we present a pseudo-polynomial time algorithm for single machines that produces a feasible schedule of total energy at most $\opt+P$ where \opt\ is the minimum energy of any fractional solution and $P$ the sum of processing times. The algorithm is based on an elegant linear programming relaxation which we  extend to the multiprocessor case in a later section. We show that the solution of the linear program relaxation can be decomposed into a convex combination of integer solutions. Since the relaxation has a strictly positive integrality gap, none of the integer solutions in the decomposition may be feasible. We overcome this by showing how an (infeasible) integer solution can be extended into a feasible solution while increasing the total energy consumption by only an additive $P$. Note that $P$ is also a lower-bound on the optimal energy consumption and hence our algorithm can be viewed as a 2-approximation. Let $n$ be the number of jobs and $D$ the maximum deadline. We prove the following theorem in Section~\ref{sec:single}.
\begin{theorem}
There is an algorithm with running time polynomial in $n,D$ for single machines that produces a schedule of total energy at most $\opt + P$.
\end{theorem}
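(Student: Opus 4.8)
The plan is to set up a linear-programming relaxation whose variables capture, for each time slot $t \in \{1,\dots,D\}$, whether the machine is \emph{active} and whether it \emph{wakes up} at $t$, together with flow-type variables $x_{j,t}$ recording how much of job $j$ is processed in slot $t$. The LP should minimize the total energy $\sum_t a_t + Q\sum_t w_t$, where $a_t$ indicates activity and $w_t$ indicates a wake-up transition, subject to: (i) each job is fully processed within its window $[r_j, \overline{d}_j]$, i.e. $\sum_t x_{j,t} = p_j$ with $x_{j,t}=0$ outside the window; (ii) a capacity constraint $\sum_j x_{j,t} \le a_t$ tying processing to activity; and (iii) a transition constraint $w_t \ge a_t - a_{t-1}$ that charges $Q$ whenever the machine switches from asleep to active. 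This relaxation is what the excerpt calls the ``elegant linear programming relaxation,'' and its optimum is exactly the fractional optimum \opt\ against which the guarantee is stated.

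First I would argue the LP can be solved in time polynomial in $n$ and $D$, since it has $O((n+1)D)$ variables and $O(nD)$ constraints; this immediately gives the claimed running time provided the rounding step is also polynomial. The conceptual core, following the contribution paragraph, is a \emph{decomposition} of the fractional activity/wake-up vector into a convex combination $\sum_k \lambda_k \chi^{(k)}$ of integral active-set patterns $\chi^{(k)} \in \{0,1\}^D$, whose expected energy equals the LP value. I would obtain this by viewing the fractional activity profile as a fractional point in a suitable polytope (e.g. exploiting that the activity and transition variables, once the $x$ values are fixed, describe a combinatorially simple structure so that the profile is a convex combination of integral ``on-intervals'' patterns). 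The averaging guarantee then forces at least one pattern $\chi^{(k)}$ whose energy is at most \opt.

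The key difficulty, and the step the excerpt flags explicitly, is that such an integral pattern need not be \emph{feasible}: the relaxation has a strictly positive integrality gap, so rounding the activity variables down to a single pattern may leave insufficient active slots inside some job's window to schedule it integrally. The plan to repair this is the additive argument announced in the overview: take the chosen integral pattern and \emph{extend} it by activating extra slots so that every job can be feasibly (and integrally, by a standard preemptive-EDF / flow argument) scheduled. The crucial accounting is that the \emph{total} amount of extra active time one ever needs to add is bounded by $P = \sum_j p_j$, because each unit of extra activity can be charged to a unit of processing that was not yet accommodated. No new wake-ups beyond those already paid for are incurred in a way that breaks the bound, so the energy increases by at most $P$, yielding total energy at most $\opt + P$.

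I expect the main obstacle to be making the decomposition-plus-extension interact cleanly: one must ensure that the integral pattern produced by the decomposition, once extended by at most $P$ units of activity, actually admits an integral feasible schedule of all jobs, and that the wake-up cost is controlled throughout the extension rather than blowing up the additive term. The feasibility of the final integral schedule on the extended active set should follow from Hall-type / max-flow feasibility conditions for preemptive scheduling with release times and deadlines (the bipartite job--slot transportation problem), so the heart of the proof is the charging argument bounding the added activity by $P$ while keeping the wake-up count no larger than what \opt\ already pays for.
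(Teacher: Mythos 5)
There is a genuine gap, and it sits exactly where you placed your confidence: the unproved claim that ``no new wake-ups beyond those already paid for are incurred in a way that breaks the bound.'' Your time-indexed relaxation (with $w_t \ge a_t - a_{t-1}$) is strictly weaker than what the result requires, because it has no analogue of a constraint forcing each job's window to be overlapped by a full unit of interval mass. Concretely, take a single job with $p_1=1$, release $0$, deadline $2$, and $Q$ large. Your LP can set $a_0=a_1=\tfrac12$, $x_{1,0}=x_{1,1}=\tfrac12$, $w_0=\tfrac12$, $w_1=0$, for a value of $1+Q/2$, while every feasible schedule costs at least $1+Q$ (one active slot plus one wake-up). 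Since $P=1$, we have $1+Q > (1+Q/2)+P$ whenever $Q>2$, so \emph{no} algorithm can produce a schedule of energy at most your LP value plus $P$: the theorem is unprovable relative to your relaxation. The failure mechanism shows up in your decomposition step: matching both the activity marginals and the wake-up marginals forces the unique decomposition $\tfrac12\,\{\text{both slots active}\}+\tfrac12\,\{\text{machine always asleep}\}$; the asleep pattern has no active slot anywhere near the job's window, and repairing it requires a brand-new wake-up costing $Q$, which cannot be charged to $P$.

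The paper's LP is interval-indexed (a variable $x_I$ for every $I\subseteq[0,D]$) precisely so that it can state the constraint $\sum_{I:\,I\cap[r_i,d_i]\neq\varnothing} x_I \ge 1$ for every job $j_i$, which has no per-slot counterpart. This constraint, combined with the specific ordering-based (round-robin style) convex decomposition --- not an arbitrary decomposition of the activity profile --- guarantees that \emph{every} integral solution $\mcC_j$ in the decomposition contains an interval overlapping every job's window. Consequently the repair step only ever lengthens existing intervals: deficient intervals are processed in order of increasing end-time $t$, the deficiency at $t$ is at most $P_t$ (the processing of jobs with deadline $t$), and an overlapping interval to stretch is always available, so the total energy increase is at most $\sum_t P_t = P$ with zero additional wake-up cost. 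Your outline correctly identifies the decomposition-plus-extension architecture and the length-charging idea, but without the job-overlap constraint in the LP, and without a decomposition that provably transfers that constraint to each integral solution, the wake-up cost of the repair is uncontrolled and the additive-$P$ bound fails.
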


Building upon ideas for the single machine case, we develop, in Section~\ref{sec:multiple} the first constant-factor approximation algorithm for the multiple machines case. Checking the feasibility  of an instance and formulating a linear program to minimize energy is much more involved in the setting of multiple machines. The intervals comprising the integer solutions in the convex decomposition of the optimum fractional solution  are not disjoint anymore, and extending the intervals appropriately in order to obtain feasibility is much more challenging now. We overcome these obstacles and present a pseudo-polynomial time algorithm that produces a feasible schedule of total energy at most $2\opt+P$. We prove the following theorem in Section~\ref{sec:multiple}
\begin{theorem}
There is an algorithm with running time polynomial in $n,D$ for $m$ parallel machines that produces a schedule of total energy at most $2\opt + P$.
\end{theorem}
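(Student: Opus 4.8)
The plan is to mirror the single-machine argument: (i) write an LP relaxation whose objective is \emph{linear} in a set of interval variables, (ii) decompose an optimal fractional solution into a convex combination of integral machine profiles, and (iii) repair the cheapest integral profile into a feasible schedule while paying only an additive $P$ in processing energy and at most an additive $\opt$ in wake-up energy.

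For the relaxation, I would discretize time to the integers in $[0,D]$ and introduce, for every pair $a<b$, a variable $z_{a,b}\ge 0$ recording the fractional number of machine busy-intervals equal to $[a,b)$. Setting $y_t=\sum_{a\le t<b}z_{a,b}$ for the number of active machines at time $t$, the energy is the linear expression $\sum_{a<b}\bigl((b-a)+Q\bigr)z_{a,b}=\int y_t\,dt+Q\sum_{a<b} z_{a,b}$, the first term being processing energy and the second the wake-up energy; expressing wake-ups through interval variables rather than through the total variation of $y$ is what keeps the objective linear. Feasibility of actually running the jobs under the capacity profile $y$ is encoded by a transportation subsystem with flow variables $x_{j,t}$ satisfying $x_{j,t}=0$ outside $[r_j,\overline{d}_j]$, $0\le x_{j,t}\le 1$ (no self-parallelism), $\sum_t x_{j,t}=p_j$ (completion), and $\sum_j x_{j,t}\le y_t\le m$ (machine ceiling). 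Minimizing energy subject to these constraints is an LP of size polynomial in $n$ and $D$; its value $\opt$ lower-bounds the integral optimum, and the instance is feasible exactly when this system is feasible with $y_t\equiv m$, which a single max-flow computation decides.

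Next I would exploit that the constraint matrix of the covering inequalities $\sum_{a\le t<b}z_{a,b}\le m$ has the consecutive-ones property and is therefore totally unimodular. Hence the polytope $\{z\ge 0:\ \sum_{a\le t<b}z_{a,b}\le m\ \ \forall t\}$ is integral, and the optimal fractional $z^{*}$ can be written as a convex combination $\sum_i\lambda_i\chi_i$ of nonnegative integral vectors $\chi_i$, each having at most $m$ intervals covering any time point. Since interval graphs are perfect, each $\chi_i$ is an honest assignment of its intervals to $m$ machines. Because the energy is linear in $z$, we get $\sum_i\lambda_i\,\mathrm{energy}(\chi_i)=\opt$, so some integral profile $\chi$ satisfies $\mathrm{energy}(\chi)\le\opt$. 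Exactly as the authors warn, the intervals making up a single $\chi$ need no longer be disjoint.

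The repair step is the crux. Fixing $\chi$ with active profile $y^\chi$, the jobs may fail to be schedulable: by max-flow/min-cut on the job--time flow, infeasibility is witnessed by a window $[\sigma,\tau]$ in which the work confined to it exceeds the available capacity $\int_\sigma^\tau y^\chi$, with the total deficiency at most $P$. I would add active machine-time to cover these deficiencies, routing each unit of unscheduled work onto one freshly-activated unit of capacity; since the fractional solution was feasible with $y^{*}_t\le m$, there is room below the ceiling to absorb each deficit, so the added processing time totals at most $P$. The delicate point is the wake-up accounting: unlike the single-machine case, a local shortage of \emph{parallelism} cannot always be removed by lengthening one existing interval, so one is sometimes forced to open brand-new intervals, each costing $Q$. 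The heart of the proof is to organize the extension so that every newly-opened interval is charged to a distinct piece of the wake-up energy already present in $z^{*}$, bounding the added wake-up energy by $\opt$; combining, the repaired schedule is feasible with energy at most $\mathrm{energy}(\chi)+P+\opt\le 2\opt+P$. The main obstacle is entirely here — carrying out the capacity extension so that (a) the extra busy time summed over all machines is at most $P$ and (b) the extra wake-ups can be charged against $\opt$, all while respecting $y_t\le m$ and not creating new flow deficiencies elsewhere. It is precisely the overlapping-interval structure that defeats the clean rightward-extension argument available on a single machine and forces the charging scheme responsible for the extra factor on $\opt$.
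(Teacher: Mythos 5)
There is a genuine gap here, and in fact two. First, your LP omits the covering constraints that are the heart of the paper's formulation. You impose only the transportation constraints ($0\le x_{j,t}\le 1$, $\sum_t x_{j,t}=p_j$, $\sum_j x_{j,t}\le y_t\le m$), but nothing forces fractional interval \emph{mass} to aggregate over job windows. The paper's LP additionally requires, for every window $[a,b]$, that the number of intervals overlapping $[a,b]$ be at least $\big\lceil \big(\sum_i \mathrm{fv}(j_i,[a,b])\big)/(b-a)\big\rceil$, where $\mathrm{fv}$ denotes forced volume (in the single-machine case this specializes to: every job's window meets at least one chosen interval). Without this, your relaxation has unbounded integrality gap and your claimed bound is false relative to your $\opt$: take $m=1$, a single unit job with $r=0$, $d=D$, and $Q$ huge. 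Your LP admits $z_{0,D}=1/D$, $x_{j,t}=y_t=1/D$, with value $1+Q/D$, whereas any feasible schedule costs at least $Q+1$; for large $Q,D$ we have $2\opt+P=3+2Q/D\ll Q+1$, so \emph{no} algorithm can produce a schedule within your bound. The covering constraints are precisely what rule out this thin spreading of interval mass, and they are also what the paper's final min-cut argument consumes.

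Second---and you say so yourself---the crux is not actually carried out: you describe the desired charging scheme (``charge each newly-opened interval to a distinct piece of wake-up energy in $z^{*}$'') but give no construction, and your generic TU-based vertex decomposition destroys the structure that would make any such scheme possible. A generic convex combination of integral vertices gives no relation between where one profile's intervals lie and where the fractional mass (or the other profiles' intervals) lies; a profile of energy at most $\opt$ may leave whole job windows untouched, and repairing it may require many new wake-ups with nothing to charge them to. The paper's decomposition is not generic: after uncrossing strictly nested intervals, the intervals are ordered by start time and assigned to solutions by a threshold/round-robin rule, which yields the key balance property that any window overlapping $l$ intervals of one solution $\mcC_i$ overlaps at most $l+1$ intervals of any other $\mcC_k$. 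Moreover, the paper spends its factor of $2$ \emph{before} the repair, in a doubling step (merge consecutive overlapping intervals or duplicate an interval, keeping at most $m$ overlaps everywhere) which guarantees that any window overlapping $0<l<m$ intervals of $\mcC$ overlaps at least $l+1$ intervals of the doubled set $\mcC'$. After that, the repair only \emph{lengthens} existing intervals of $\mcC'$---it never opens a new one---so the wake-up count never grows; the added length is at most $P$, and termination with zero deficiency follows from a cut-capacity argument in the feasibility network that combines the balance property with the LP covering constraints. In short, your plan of ``open new intervals and charge them against $\opt$'' is exactly the step the paper's doubling-plus-extension strategy is engineered to avoid, and without either the covering constraints or the balanced decomposition there is no way to complete it.
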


Finally, in the Appendix, we show that the running time of our algorithms can be made polynomial in $n, 1/\epsilon$; we incur a $(1+\epsilon)$ loss in the approximation factor in this process.

\subsection{Further Related work}

An important generalization of our problem would be speed scaling with a sleep state, where the processor can vary its speed when in the active state in order to further save energy. The power consumption of the processor when it is active depends on its speed. In a processor with only speed scaling (and no sleep state) one tries to keep the processor speed as low as possible (since power is a convex function of speed). However with both speed scaling and a sleep state it is often beneficial to run the processor at faster speeds in order to increase the length of the subsequent sleep states, a technique commonly referred to as \emph{race to idle}.  
Speed scaling with a sleep state was first introduced in~\cite{iraniSGtalg07} who gave a $2$-approximation algorithm for the problem. This result was later improved to a $4/3$-approximation by Albers and Antoniadis~\cite{AlbersA14racetoidle1}, and eventually to a fully polynomial time approximation scheme (FPTAS) by Antoniadis et al.~\cite{AntoniadisHO15racetoidle2}. This is the best result one can hope for (unless $\P=\NP$), as the problem is known to be \NP-hard~\cite{AlbersA14racetoidle1,KumarS15a}.

Another problem similar to ours is that of minimizing the number of gaps (a gap is a contiguous interval during which the processor is idle) in the schedule. If one is interested in exact solutions then this is a special case of our problem since by choosing a large value for energy consumed in the active state we can ensure that every idle period results in a transition to the sleep state; thus the optimal schedule also minimizes the number of gaps. Chrobak et al.~\cite{chrobak2017greedy} gave a simple $2$-approximation algorithm for the gap minimization problem with a running time of $O(n^2 \log n)$ and memory just $O(n)$. Demaine et al.~\cite{demaine2007scheduling} gave an exact algorithm for the multiprocessor gap minimization problem with unit-size tasks. Several further generalizations - for example the set-cover-hard case when each job has several disjoint release time-deadline intervals to choose from - of the problem were considered in~\cite{demaine2007scheduling,DemaineZ10}.

Finally, one may consider the setting where one knows exactly when the processor (or how many processors at each point in time) need to be active in order to execute jobs, and has to decide about when to transition the processor(s) between the states. Although the offline version of the problem with a single processor equipped just with one active and one sleep state becomes trivial, the online version turns out to be a generalization of the well-known ski-rental problem. Additionally considering processor(s) with sleep states of various depths (each having an individual power consumption and an individual cost for transitioning back to the active state) leads to many interesting algorithmic problems both in the offline and in the online scenarios that have been studied by Albers~\cite{Albers17spaa}, Augustine et al.~\cite{augustineIS08}, as well as  Irani et al.~\cite{IraniSG03multipowerdown}.

\section{Preliminaries}
We are given a set of jobs \set{j_1,j_2,\ldots, j_n}; job $j_i$ has release time $r_i$, deadline $d_i$ and processing time $p_i$ and we assume that all these quantities are non-negative integers. 
Let $r_{min}$ and $d_{max}$ be the earliest release time and furthest deadline of any job; it is no loss of generality to assume $r_{min} = 0$ and $d_{max}=D$. For $t\in \mathbb{Z}^+$, let $[t,t+1]$ denote the $t^{\rm th}$ {\em time-slot}. Let $I=[t,t'], t, t'\in \mathbb{Z}^+, t<t'$ be an {\em interval}. The length of $I$, denoted by \abs{I} is $t'-t$. We use $t\in I_1$ to denote $a_1\le t\le b_1$. 

Two intervals $I_1=[a_1,b_1]$ and $I_2=[a_2,b_2]$ {\em overlap} if there is a $t$ such that $t\in I_1$ and $t\in I_2$. Thus two intervals which are right next to each other would also be considered overlapping. Intervals which do not overlap are considered {\em disjoint}. $I_1$ is {\em contained} in $I_2$, denoted $I_1\subseteq I_2$, if $a_1\le a_2 < b_2\le b_1$ and it is {\em strictly contained} in $I_2$, denoted $I_1\subset I_2$, if $a_1 < a_2 < b_2 < b_1$. 

At any time-slot, a machine can be in the \emph{active} or the \emph{sleep} state. For each time-slot that a machine is in the active state, one unit of power is required whereas no power is consumed in the sleep state. However, $Q$ units of energy (called \emph{wake up} energy) are expended when the machine transitions  from the sleep to the active state. In its active state, the machine can  either process a job (in which case we refer to it as being \emph{busy}) or just be \emph{idle}. On the other hand the machine cannot perform any processing while in the sleep state. Note that if a machine is not required to do any processing for $L$ consecutive time-slots, then it is advantageous to transition it to the sleep state when $L>Q$ whereas for $L\le Q$ it is preferable to keep it active but idle.

A machine can process at most one job in any time-slot and a job cannot be processed on more than one machine in a time-slot. However, job preemption and migration are allowed, i.e., processing of a job can be stopped at any time and resumed later on the same or on a different machine. A job $j_i$ must be processed for $p_i$ time-slots in $[r_i,d_i]$. Any assignment of jobs to machines and time slots satisfying the above conditions is called a (feasible) \emph{schedule}. We assume that the machine is initially in the sleep state. Therefore, the energy consumed by a schedule is the total length of the intervals during which the machine is active plus $Q$ times the number of intervals in which the machine is active. The objective of the problem is to find a schedule which consumes minimum energy.

\section{An additive \texorpdfstring{$P$}{P} approximation for single machines}
\label{sec:single}
We first show how to schedule jobs on a single machine so that the total energy consumption is at most $P$ more than the optimum. For any $[a,b]\subseteq[0,D]$ (recall $D$ is the furthest deadline of any job), let $V(a,b)=\sum_{i:[r_i,d_i]\subseteq[a,b]} p_i$ be the total processing time of jobs whose release and deadline are within $[a,b]$. For an instance to be feasible it is necessary that for all $0\le a < b\le D$, $V(a,b)\le b-a$. The Earliest Deadline First (EDF) algorithm for scheduling jobs with release dates and deadlines can also be used to establish the sufficiency of this condition. 
 
 Motivated by this necessary and sufficient condition for determining if an instance is feasible, we consider the following Integer Program for minimizing total energy consumed. For $I\subseteq [0,D]$ let $x_I$ be a variable which is 1 if the machine becomes active at the start of $I$ and remains so till its end when it transitions back to the sleep state; $x_I$ is 0 otherwise. Since the machine uses $Q$ units of energy to wake-up at the start of $I$ and \abs{I} units to run during this interval, the objective is to minimize $\sum_I x_I(Q+\abs{I})$. We next discuss the constraints of this IP.
 \begin{enumerate}
     \item The intervals in which the machine is active are disjoint and hence for $0\le t\le D$, $\sum_{I: t\in I} x_I\le 1$.
     \item To ensure that jobs can meet release dates and deadlines when scheduled within active intervals we add the constraint that for all $0\le a < b\le D$, $V(a,b)\le \sum_I x_I\abs{I\cap[a,b]}$.
     \item For any job $j_i$, the machine should be active at some point during $[r_i,d_i]$. Hence \\ $\sum_{I:I\cap[r_i,d_i] \neq \varnothing} x_I \geq 1$
 \end{enumerate}
 This gives us the following integer program.
 \begin{lp}{minimize}{\sum_I x_I(Q+\abs{I})}
 \cnstr{\sum_{I: [t,t+1]\in I} x_I}{\le}{1}{0\le t\le D-1}
 \cnstr{\sum_I x_I\abs{I\cap[a,b]}}{\ge}{V(a,b)}{0\le a < b\le D}
 \cnstr{\sum_{I:I\cap[r_i,d_i]\neq\varnothing} x_I}{\geq}{1}{1\le i\le n}
 \cnstr{x_I}{\in}{\set{0,1}}{I\subseteq[0,D]}
 \end{lp}

Consider a feasible solution to this IP and let $\mcI=\set{I|x_I=1}$. A time-slot $[t,t+1]$ is {\em active} if it is contained in some interval of $\mcI$.
\begin{clm}
 Every job $j_i$ can be assigned to $p_i$ active time slots in $[r_i,d_i]$ such that each active time-slot is assigned to at most 1 job.
\end{clm}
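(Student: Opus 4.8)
The plan is to phrase the claim as the existence of a degree-constrained assignment in a bipartite structure and to certify it via Hall's theorem. I would form a bipartite graph whose left vertices are the jobs and whose right vertices are the active time-slots, joining job $j_i$ to a slot $[t,t+1]$ precisely when $[t,t+1]\subseteq[r_i,d_i]$ (equivalently $r_i\le t$ and $t+1\le d_i$). Assigning job $j_i$ to $p_i$ active slots inside $[r_i,d_i]$ with no slot used twice is then exactly an assignment that saturates each job $j_i$ to its demand $p_i$ while using every slot at most once. By the deficiency (defect) form of Hall's theorem for such assignments --- obtained by splitting each job $j_i$ into $p_i$ identical copies and applying the usual theorem, or equivalently by max-flow--min-cut on the obvious unit-capacity network --- such an assignment exists if and only if $\sum_{i\in S}p_i\le\abs{N(S)}$ for every set $S$ of jobs, where $N(S)$ is the set of active slots adjacent to at least one job of $S$.

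First I would record a reformulation of the constraints. By constraint (1) the intervals of $\mcI$ are pairwise disjoint, so for any $[a,b]$ the quantity $\sum_I x_I\abs{I\cap[a,b]}$ counts exactly the active slots inside $[a,b]$; constraint (2) therefore reads $V(a,b)\le(\text{number of active slots in }[a,b])$. This is the only place the IP feasibility enters.

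The main step is verifying Hall's condition for an arbitrary $S$, and here the obstacle is that $N(S)$ need not be an interval whereas constraint (2) is indexed by intervals. To bridge this I would decompose $S$ according to the maximal intervals $[a_1,b_1],\dots,[a_k,b_k]$ of the point set $\bigcup_{i\in S}[r_i,d_i]$. Each window $[r_i,d_i]$ is connected and hence lies inside a single $[a_\ell,b_\ell]$, so $S$ splits as $S_1\cup\dots\cup S_k$ with $S_\ell=\set{i\in S:[r_i,d_i]\subseteq[a_\ell,b_\ell]}$, and the neighborhoods $N(S_\ell)$ are pairwise disjoint because the $[a_\ell,b_\ell]$ are separated by gaps. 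The crux is the observation that, since all release times, deadlines and slot boundaries are integers, every active slot $[t,t+1]\subseteq[a_\ell,b_\ell]$ is contained in one single window of $S_\ell$: the midpoint $t+\tfrac12$ lies in some window $[r_i,d_i]$ with $i\in S_\ell$, and integrality of $r_i,d_i$ forces $r_i\le t$ and $t+1\le d_i$. Hence $N(S_\ell)$ equals the full set of active slots in $[a_\ell,b_\ell]$.

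Putting the pieces together then finishes the argument:
\[
\sum_{i\in S}p_i=\sum_{\ell}\sum_{i\in S_\ell}p_i\le\sum_\ell V(a_\ell,b_\ell)\le\sum_\ell\abs{N(S_\ell)}=\abs{N(S)},
\]
where the first inequality holds because the sum defining $V(a_\ell,b_\ell)$ ranges over all jobs with window inside $[a_\ell,b_\ell]$, a superset of $S_\ell$; the second is the reformulated constraint (2) combined with the previous paragraph; and the final equality uses the disjointness of the $N(S_\ell)$. Thus Hall's condition holds for every $S$ and the desired assignment exists. I expect the integrality/midpoint step certifying $N(S_\ell)=\set{\text{active slots in }[a_\ell,b_\ell]}$ to be the only genuinely delicate point; everything else is bookkeeping, and one could alternatively bypass Hall entirely by contracting the inactive slots and invoking EDF feasibility, but the matching formulation keeps the reduction to constraint (2) most transparent.
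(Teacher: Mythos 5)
Your proof is correct and follows essentially the same route as the paper: Hall's theorem on the bipartite graph of jobs (split into $p_i$ copies) versus active time-slots, reduced to constraint (2) via the fact that the relevant neighborhood is exactly the set of active slots of an interval. The only differences are cosmetic --- the paper takes a \emph{minimal} violating Hall set and uses minimality to conclude that the union of its windows is a single interval, whereas you verify Hall's condition for arbitrary $S$ directly by decomposing $\bigcup_{i\in S}[r_i,d_i]$ into maximal intervals (the same idea run forwards rather than by contradiction), and you make explicit the midpoint/integrality argument showing every active slot in such an interval is contained in some job window, a step the paper leaves implicit.
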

\begin{proof}
Construct a bipartite graph $G=(U,V,E)$. For every job $j_i$ we have $p_i$ vertices in $U$ and for every active time slot we have a vertex in $V$. $E$ has an edge between a vertex corresponding to job $j_i$ and a vertex corresponding to the active time-slot $[t,t+1]$ iff $[t,t+1]\subseteq [r_i,d_i]$. We want to find a matching in $G$ which matches all vertices of $U$.

For contradiction assume that there is no such matching. By Hall's theorem there exists a Hall set $S\subseteq U$ such that $\abs{\Gamma(S)}<\abs{S}$ where $\Gamma(S)$ are the vertices in $V$ adjacent to vertices in $S$. Let $S$ be a minimal Hall set. Two vertices in $U$ corresponding to the same job have identical neighbors in $V$ and hence it is no loss of generality to assume that $S$ contains all vertices corresponding to the same job. This allows us to view $S$ as a set of jobs; \abs{S} then equals the total processing time of the jobs in $S$.

Consider the union of intervals $[r_i,d_i]$ where $j_i$ is a job in $S$. The minimality of $S$ implies that this union is a single interval, say $[a,b]$. Note that $V(a,b)\ge\abs{S}$ and $\abs{\Gamma(S)}$ is the number of active time slots in $[a,b]$. From the second set of constraints of the IP it follows that $\abs{S}\le V(a,b)\le\abs{\Gamma(S)}$ which contradicts our assumption that $S$ is a Hall set.
\end{proof}
The above claim implies that an optimum solution to the integer program gives a feasible schedule which minimizes energy. We relax the integrality constraint on $x_I$ to $0\le x_I\le 1$ and solve the resulting linear program. Let $x$ be the optimum fractional solution and let $\mcI=\set{I|x_I>0}$. We will next show that $x$ be decomposed into a convex combination of integer solutions.

{\bf Ordering intervals in $\mcI$}: Let $[a,d],[b,c]\in\mcI$, $[b,c]\subset[a,d]$ and $x_{[a,d]}=x_{[b,c]}=\alpha$. we replace these intervals in $\mcI$ with intervals $[a,c],[b,d]$ and set $x_{[a,c]}=x_{[b,d]}=\alpha$. Doing so does not make $x$ infeasible nor does it change the objective value. If $\beta=x_{[a,d]}>x_{[b,c]}=\alpha$ then we replace these intervals in $\mcI$ with three intervals $[a,d],[a,c],[b,d]$ and set $x_{[a,d]}=\beta-\alpha$ and $x_{[a,c]}=x_{[b,d]}=\alpha$. The case when $\beta=x_{[a,d]}<x_{[b,c]}=\alpha$ is handled similarly. We repeat this process whenever an interval in $\mcI$ strictly contains another interval in $\mcI$. Finally, order the intervals in $\mcI$ by their start-times; intervals which have the same start-time are ordered by their end-times. Let $\prec$ denote this total order on intervals of $\mcI$. Note that since no interval is strictly contained in another, we would get the same ordering if intervals were ordered by their end-times with intervals having the same end-time ordered by their start-times. 

{\bf Decomposing $x$ into a convex combination of integer solutions}: For $I\in\mcI$ let $s_I$ be the fractional part of $\sum_{I'\prec I} x_{I'}$; thus $0\le s_I<1$. For $k$, $0\le k<1$ construct $\mcI_k\subseteq \mcI$ as follows: $I\in \mcI_k$ iff either $s_I\le k < s_I+x_I$ or $s_I\le k+1 < s_I+x_I$.
\begin{clm}\label{cl:disjoint}
 The intervals in $\mcI_k$ are disjoint.
\end{clm}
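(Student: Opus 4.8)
The plan is to read the construction as a ``stacking'' of the intervals along the real line, and to prove that at any fixed time the intervals covering it occupy a single window of length at most $1$ in this stacking; two intervals placed on the same level $k$ therefore cannot share a point.

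First I would introduce, for each $I\in\mcI$, the \emph{stacking position} $P_I=\sum_{I'\prec I}x_{I'}$, so that $s_I$ is precisely the fractional part of $P_I$ and the half-open segments $[P_I,P_I+x_I)$ partition $[0,\sum_{I}x_I)$ as $I$ runs through $\mcI$ in the order $\prec$. Since $0\le s_I<1$ and $x_I\le 1$, the segment of $I$ is contained in $[\floor{P_I},\floor{P_I}+2)$, so the only points of the form $k+j$ with $j\in\Z$ that it can contain are $k+\floor{P_I}$ and $k+\floor{P_I}+1$ — exactly the two clauses in the definition of $\mcI_k$. Thus $\mcI_k$ is nothing but the set of intervals whose stacking segment contains a point congruent to $k$ modulo $1$, and distinct intervals have disjoint stacking segments.

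The crux is the following structural fact. After the reordering step no interval of $\mcI$ strictly contains another, and together with the definition of $\prec$ this forces the start-times and end-times to be simultaneously nondecreasing along $\prec$: if $I\prec I'$ had $b_I>b_{I'}$ then either $a_I=a_{I'}$ violates the end-time tie-break, or $a_I<a_{I'}$ yields $a_I<a_{I'}<b_{I'}<b_I$, i.e.\ strict containment. Using this I would show that, for any point $t$, the intervals of $\mcI$ containing $t$ form a \emph{contiguous block} in the order $\prec$: whenever $I\prec I''\prec I'$ with $t\in I$ and $t\in I'$, joint monotonicity gives $a_{I''}\le a_{I'}\le t\le b_I\le b_{I''}$, so $t\in I''$. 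Hence the stacking segments of all intervals covering $t$ are consecutive and fill a single half-open window whose length is $\sum_{I:\,t\in I}x_I\le1$, by the first (pointwise) LP constraint.

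Finally I would conclude by contradiction. If $I,I'\in\mcI_k$ overlapped, they would share some point $t$; by the previous paragraph both of their stacking segments lie inside one window of length at most $1$. Such a window meets each residue class modulo $1$ at most once, yet the two segments are disjoint and each must contain a point congruent to $k$ — impossible. Therefore the intervals of $\mcI_k$ are pairwise disjoint. I expect the contiguous-block lemma to be the main obstacle: it is exactly the normalization ``no interval strictly contains another'' that upgrades the order $\prec$ to joint monotonicity of both endpoints, which in turn is what converts ``sharing a common time'' into ``occupying consecutive positions of total length at most $1$'' in the stacking.
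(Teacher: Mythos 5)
Your proof is correct and is essentially the paper's argument in contrapositive form: the paper also combines the fact that two members of $\mcI_k$ force total weight $\sum_{I_1\preceq I\preceq I_2} x_I > 1$ (your two residue-$k$ points in disjoint stacking segments) with the observation that overlapping intervals, together with everything between them in $\prec$, share a common time $t$, violating the constraint $\sum_{I:t\in I}x_I\le 1$. Your contiguous-block lemma is exactly the step the paper asserts implicitly ("all intervals $I$ with $I_1\preceq I\preceq I_2$ have a common overlap"), so you have merely made the same proof more explicit.
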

\begin{proof}
Let $I_1,I_2\in\mcI_k$, $I_1\prec I_2$ and $I_1\cap I_2\neq\varnothing$. Since $I_1,I_2\in\mcI_k$ and $I_1\prec I_2$, we get $\sum_{I_1\preceq I\preceq I_2} x_I > 1$. Since $I_1,I_2$ are not disjoint, all intervals $I$ such that $I_1\preceq I\preceq I_2$ have a common overlap, say at time $t$. But this violates the LP-constraint $\sum_{I:t\in I} x_I\leq 1$ and yields a contradiction.
\end{proof}
Let $0=s_1<s_2<\cdots<s_m< 1$ be the distinct values in the set \set{s_I, I\in\mcI}; note that $m\le\abs{\mcI}$. From our construction of $\mcI_k$ it follows that for all $k\in[s_j,s_{j+1})$ the set $\mcI_k$ are identical; let $\mcC_j$ denote this set and we assign it a weight $w_j=s_{j+1}-s_j$ (or $1-s_m$ for the border case when $j=m$). 
By Claim~\ref{cl:disjoint}, each ``solution" $\mcC_j, 1\le j\le m$ is a set of disjoint intervals.
\begin{clm}
 The solutions $\mcC_j$ and weights $w_j$, $1\le j\le m$, form a convex decomposition of the fractional solution $x$.
\end{clm}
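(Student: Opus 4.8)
The plan is to verify the two defining properties of a convex decomposition: that the weights $w_j$ are nonnegative and sum to $1$, and that for every interval $I\in\mcI$ the total weight of the solutions containing $I$ equals $x_I$, i.e. $\sum_{j:\,I\in\mcC_j} w_j = x_I$. Combined with Claim~\ref{cl:disjoint}, this exhibits $x$ as the convex combination $\sum_{j} w_j\,\chi^{(j)}$, where $\chi^{(j)}$ is the characteristic vector of $\mcC_j$ in the space of interval-variables.

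The first property is immediate. Each $w_j = s_{j+1}-s_j$ is positive because the $s_j$ are strictly increasing, and $w_m = 1-s_m>0$ because $s_m<1$. Summing telescopes: $\sum_{j=1}^{m} w_j = (s_2 - s_1) + \cdots + (s_m - s_{m-1}) + (1 - s_m) = 1 - s_1 = 1$, using $s_1 = 0$.

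The heart of the argument is the second property, which I would recast as a measure computation. The half-open intervals $[s_j, s_{j+1})$ (with $[s_m,1)$ for $j=m$) partition $[0,1)$, and $[s_j, s_{j+1})$ has length $w_j$. Since $\mcI_k$ is constant and equal to $\mcC_j$ throughout $[s_j,s_{j+1})$, the quantity $\sum_{j:\,I\in\mcC_j} w_j$ is exactly the Lebesgue measure of $\set{k\in[0,1) : I\in\mcI_k}$. By the definition of $\mcI_k$, this set is $\bigl([s_I, s_I+x_I)\cup[s_I-1, s_I+x_I-1)\bigr)\cap[0,1)$, i.e. the points $k\in[0,1)$ congruent mod $1$ to some element of $[s_I, s_I+x_I)$. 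I would finish by a short case analysis on whether the window wraps around: if $s_I + x_I \le 1$ the first interval lies in $[0,1)$ and contributes $x_I$ while the second is disjoint from $[0,1)$; if $s_I+x_I>1$ the first contributes $1-s_I$ and the wrapped piece $[0, s_I+x_I-1)$ contributes $s_I+x_I-1$, again summing to $x_I$. Either way the measure is $x_I$, which is the desired identity.

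The step I expect to require the most care is this wraparound bookkeeping: the second disjunct $s_I\le k+1<s_I+x_I$ in the definition of $\mcI_k$ exists precisely to capture intervals whose active window in the parameter $k$ crosses the boundary between $1$ and $0$, and one must check that the two disjuncts neither overlap nor leave a gap, so that their combined measure is exactly $x_I$ without double counting. It is also worth recording why $\mcI_k$ is genuinely constant on each $[s_j,s_{j+1})$: the only values of $k$ at which membership of $I$ can change are $s_I$ and the fractional part of $s_I+x_I$; the latter equals $s_{I'}$ for the immediate $\prec$-successor $I'$ of $I$, since $s_{I'}$ is the fractional part of $\sum_{I''\prec I}x_{I''}+x_I$. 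Hence all breakpoints lie among the $s_j$, confirming that the partition $\set{[s_j,s_{j+1})}$ is compatible with the sets $\mcC_j$.
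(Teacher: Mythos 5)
Your proof is correct and takes essentially the same approach as the paper: compute, for each interval $I\in\mcI$, the total weight of the solutions containing $I$ as the length of the parameter set $\set{k\in[0,1) : I\in\mcI_k}$ and show it equals $x_I$. If anything, yours is more complete: the paper's proof sets $s_I+x_I=s_b$ and sums $s_b-s_a$, tacitly assuming the window does not wrap past $1$, whereas your case analysis on $s_I+x_I\le 1$ versus $s_I+x_I>1$ (with the check, using $x_I\le 1$, that the two disjuncts in the definition of $\mcI_k$ contribute disjoint pieces) explicitly covers the wraparound case that the second disjunct exists for, and your breakpoint argument makes explicit why $\mcI_k$ is constant on each $[s_j,s_{j+1})$, a step the paper asserts without proof.
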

\begin{proof}
First note that for all $1\le j\le m$, $w_j\ge 0$ and $\sum_{j=1}^m w_j=1$. Now consider an interval $I\in\mcI$ and let $s_I=s_a$ and $s_I+x_I=s_b$, $b>a$. The interval $I$ appears in solutions $\mcC_a,\mcC_{a+1},\ldots,\mcC_{b-1}$ and these have a total weight $s_b-s_a=x_I$.
\end{proof}
\Remark{An alternate procedure to construct this convex decomposition of $x$ would be to replace each interval $I\in\mcI$ with $x_I/\epsilon$ intervals where $\epsilon$ is such that $x_I/\epsilon$ is an integer for all $I\in\mcI$. Let $\mcI'$ be the multiset of intervals obtained. Consider intervals in $\mcI'$ in the order $\prec$ and assign them to solutions $\mcC_1,\mcC_2,\ldots,\mcC_{1/\epsilon}$ in a round robin manner. Although easy to present, this procedure has the disadvantage that the number of solutions in the convex decomposition is $1/\epsilon$ and $\epsilon$ which is the granularity of the fractional solution $x$, could be exponentially small. One could round $x$ to multiples of $\epsilon$ for a suitable choice of $\epsilon$ but this would then incur a multiplicative constant in the approximation guarantee. The procedure presented above is conceptually similar to this round-robin assignment.}

{\bf Extending Intervals}: Although $\mcC_j, 1\le j\le m$ is a set of disjoint intervals it need not be a feasible solution, i.e. it could be that jobs cannot meet release dates and deadlines if they have to be scheduled within intervals of $\mcC_j$. This is illustrated by the example in Figure~\ref{fig:bad-example}, the details of which can be found in the Appendix.
\begin{figure}
    \centering
    \includegraphics[scale=0.6]{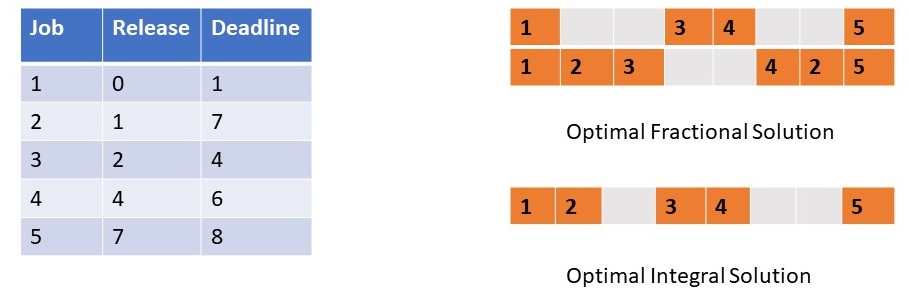}
    \caption{An instance where solutions in the convex decomposition are not all feasible. All tasks are unit size. The top right shows the two solutions $\mcC_1,\mcC_2$ in the convex decomposition of the optimum fractional solution. The total length of intervals in $\mcC_1$ is 4 which is less than the total processing time of jobs and implies $\mcC_1$ is infeasible.}
    \label{fig:bad-example}
\end{figure}
We next show that we can extend the intervals in any solution $\mcC_j, 1\le j\le m$ by at most $P$ units to get a feasible solution, $\mcC'_j$. 
\begin{lemma}
Let $\mcC=\mcC_j, 1\le j\le m$ be a solution from the convex decomposition of $x$. $\mcC$ can be converted into a feasible solution $\mcC'$ by increasing the total length of intervals  in $\mcC$ by at most $P$.
\end{lemma}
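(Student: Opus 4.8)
The plan is to characterize infeasibility of $\mcC$ through Hall's condition, reduce it to a pairwise disjoint family of \emph{deficient} intervals, and repair each such interval locally by lengthening an interval of $\mcC$ that already touches it. Throughout, let $A(a,b)$ be the number of active time-slots of $\mcC$ lying in $[a,b]$. Exactly as in the bipartite matching claim above (now applied to the integral solution $\mcC$, where each active slot has unit capacity), Hall's theorem shows that $\mcC$ admits a feasible schedule if and only if $V(a,b)\le A(a,b)$ for all $0\le a<b\le D$, and that the minimal Hall sets are precisely the jobs whose windows $[r_i,d_i]$ union to a single interval $[a,b]$. Thus $\mcC$ is infeasible exactly when some interval is deficient, i.e. $\mathrm{def}(a,b):=V(a,b)-A(a,b)>0$, and the task is to activate extra slots until no interval is deficient.

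First I would record a structural fact that lets us repair without ever increasing the number of intervals, and hence without paying extra wake-up energy $Q$: for every job $j_i$ and every solution $\mcC_j$ in the decomposition, some interval of $\mcC_j$ meets $[r_i,d_i]$. This follows from the third LP constraint together with the construction of the decomposition. After the uncrossing step no interval of $\mcI$ is strictly contained in another, and from this one checks that the intervals meeting a fixed window $[r_i,d_i]$ form a contiguous block in the order $\prec$; their level-ranges $[s_I,s_I+x_I)$ therefore tile an arc of length $\sum_{I:\,I\cap[r_i,d_i]\neq\varnothing}x_I\ge 1$, i.e. all of $[0,1)$, so every level $k$, and hence every $\mcC_j$, contains an interval meeting $[r_i,d_i]$.

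With this in hand the repair is local. Any deficient $[a,b]$ contains a job window, hence is met by an interval of $\mcC$, and since the instance is feasible we have $|[a,b]|=b-a\ge V(a,b)>A(a,b)$, so $[a,b]$ has at least $\mathrm{def}(a,b)$ currently inactive slots. I would activate $\mathrm{def}(a,b)$ of them by growing the interval of $\mcC$ that meets $[a,b]$ strictly inward (absorbing any intervals it runs into), raising $A$ on $[a,b]$ to $V(a,b)$ while only lengthening or merging existing intervals, so the interval count never grows. To bound the \emph{total} added length and obtain global feasibility at once, I would pass to a maximal collection of pairwise disjoint deficient intervals and repair each independently. The quantitative heart is the min--max statement that the least number of slots one must activate to make every interval non-deficient equals $\max\{\sum_\ell \mathrm{def}(a_\ell,b_\ell):[a_\ell,b_\ell]\text{ pairwise disjoint}\}$, which is the integral optimum of a transportation/flow problem whose constraint matrix has the consecutive-ones property and is hence totally unimodular. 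Granting this, the bound is immediate: for disjoint intervals each job window lies in at most one $[a_\ell,b_\ell]$, so $\sum_\ell \mathrm{def}(a_\ell,b_\ell)\le\sum_\ell V(a_\ell,b_\ell)\le\sum_i p_i=P$.

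I expect the main obstacle to be the global consistency hidden in the min--max claim: that repairing a disjoint family kills \emph{all} deficiencies, not merely those of the chosen intervals, and that no repair undoes another. The delicate case is nested or crossing deficient intervals, where slots activated for an inner window must be charged against the correct outer deficit. I would handle this through the submodularity of $\mathrm{def}$, which follows from $V(I_1)+V(I_2)\le V(I_1\cup I_2)+V(I_1\cap I_2)$ and the additivity of $A$ on overlapping intervals: uncrossing two crossing deficient intervals into their union and intersection reduces the analysis to a laminar family and cannot increase the total activation required. A secondary, purely bookkeeping obstacle is to verify that growing an interval inward always places the new slots inside the intended deficient window and never spawns a fresh interval; this is exactly what the structural fact of the second paragraph secures.
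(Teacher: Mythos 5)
Your local repair idea and the structural fact you establish first (every window $[r_i,d_i]$ meets an interval of each $\mcC_j$, via the third LP constraint and the decomposition) coincide with the paper's, and your min--max claim is essentially sound \emph{as a statement about the free covering problem}: by consecutive-ones/total unimodularity, the minimum number of slots one may activate \emph{anywhere} so that no interval remains deficient equals the maximum total deficiency over slot-disjoint families, which is at most $P$. The gap is that this does not certify your actual procedure. First, the lemma needs the new slots to arise by \emph{extending} intervals of $\mcC$ (otherwise the wake-up count grows and the energy accounting after the lemma fails), and you never show that an optimal free activation set is realizable by extensions. Second, and more seriously, the procedure you describe --- repair a maximal disjoint family of deficient intervals, then iterate on whatever deficiencies remain --- can double-charge across rounds and exceed $P$. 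Concretely, take $\mcC=\set{[0,1],[7,8]}$ and jobs with windows $[0,10]$ ($p=1$), $[5,15]$ ($p=1$), $[5,10]$ ($p=4$), so $P=6$; this configuration satisfies every property your argument uses (the instance is feasible and every window meets an interval of $\mcC$). The interval $[0,10]$ has deficiency $3$ and $\set{[0,10]}$ is a maximal disjoint deficient family; if you repair it by growing $[0,1]$ rightwards, the three new slots lie in $[1,4]$, outside $[5,15]$, whose deficiency stays $4$, and the next round costs $4$ more: total $7>P$. Your submodularity/uncrossing remark repairs the dual certificate, not this primal accounting: nothing in your argument forces the slots added for one deficient interval to count toward the deficient intervals crossing it.

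The paper closes exactly this hole with a charging scheme your proposal lacks: process deficient intervals in increasing order of their right endpoint $t$. When $t$ is processed, every interval ending earlier already has zero deficiency, so every deficient $[a_i,t]$ has deficiency at most $P_t$, the processing time of jobs with deadline exactly $t$; moreover all these intervals are nested (they share the endpoint $t$), so a single extension --- grow one overlapping interval of $\mcC$ rightwards up to $t$, then leftwards, stopping after $P_t$ new slots or upon covering everything before $t$ --- fixes all of them at once (in the covering case, feasibility of the instance gives $s(a_i,t)=t-a_i\ge V(a_i,t)$). Each distinct deadline is charged at most once, so the total added length is at most $\sum_t P_t=P$. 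In the example above this ordering repairs $[5,10]$ and $[0,10]$ together at cost $P_{10}=5$ and then finds $[5,15]$ already satisfied. To salvage your LP route you would need to prove that the free-activation optimum is always attainable using extensions only and exhibit an order of repairs achieving it; the paper's end-time induction is the simpler path.
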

\begin{proof}
A slot $[t,t+1]$ is {\em active} if it is contained in some interval in $\mcC$. Let $s(a,b)$ be the number of active slots in the interval $[a,b]\subseteq[0,D]$ and $\delta(a,b)=\max(0,V(a,b)-s(a,b))$ its {\em deficiency}.

If $\mcC$ is infeasible there exists $[a,b]$ such that $\delta(a,b)>0$. Among all intervals with positive deficiency consider those whose end-time is the least and let these be $[a_1,t],[a_2,t],\ldots, [a_k,t]$ where $t>a_1>a_2>\cdots>a_k$. Let $P_t$ be the total processing time of jobs whose deadline is $t$. For $1\le i\le k$, $V(a_i,t)\le V(a_i,t-1)+P_t$ and since $\delta(a_i,t-1)=0$ we have $\delta(a_i,t)\le P_t$. 

We now show how to extend intervals in $\mcC$ by $P_t$ time-slots so that deficiency of intervals $[a_i,t], 1\le i\le k$ becomes 0. 
\begin{clm}
$\mcC$ contains an interval which overlaps $[a_1,t]$.
\end{clm}
\begin{proof}
$\delta(a_1,t)>0$ implies $V(a_1,t)>0$ which in turn implies that there exists a job $j_i$ such that $[r_i,d_i]\subseteq[a_1,t]$. The third set of constraints of the integer program ensure that the sum of $x_I$ where $I\in\mcI$ and $I\cap[r_i,d_i]\neq\phi$ is at least 1. By our procedure for building the convex decomposition it follows that at least one of these intervals is in $\mcC$. Since this interval overlaps $[r_i,d_i]$ it also overlaps $[a_1,t]$ proving the claim.
\end{proof}
Let $I\in\mcC$ overlap $[a_1,t]$. We first extend $I$ to the right till we have included time-slot $[t-1,t]$ and continue by extending $I$ to the left, perhaps combining with other intervals of $\mcC$ in this process. We stop when $P_t$ time-slots have been added or when all time-slots before $t$ have been included. Consider the interval $[a_i,t]$. Either we have added $P_t$ time slots in this interval or extended $I$ to include all time-slots in this interval. In the former case the deficiency of $[a_i,t]$ is reduced to 0. In the later case $s(a_i,t)=t-a_i\ge V(a_i,t)$, where the second inequality follows from the fact that the instance is feasible. Hence $\delta(a_i,t)=0$.

After having reduced to zero the deficiency of all intervals ending at $t$, we find the next set of intervals with positive deficiency whose end-time is the least. The process continues till all intervals have zero deficiency. Note that the intervals of $\mcC$ are extended by at most $\sum_t P_t=P$ time-slots. 

\end{proof}
Since the number of intervals in $\mcC'_j$ equals the number of intervals in $\mcC_j$ and the total length of intervals in $\mcC'_j$ exceeds the total length of intervals in $\mcC_j$ by at most $P$, the energy consumed by the solution $\mcC'_j$ is at most $P$ more than the energy consumed by $\mcC_j$. Since this is true for all solutions $\mcC'_j, 1\le j\le m$, the solution of minimum cost among these has cost at most $P$ more than the optimum fractional solution. 
\begin{theorem}
Given $n$ jobs with release dates, processing times and deadlines in $[0,D]$, there is an algorithm with running time polynomial in $n,D$ which schedules these jobs on a single machine such that the total energy consumption is at most $\opt +P$ where $P$ is the sum of processing times.
\end{theorem}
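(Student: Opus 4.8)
The plan is to assemble the pieces already in place, since the genuinely difficult work has been done. First I would observe that the linear program is a relaxation of the integer program, and that by the matching Claim the integer program's optimum equals the minimum energy of any feasible (integral) schedule: any feasible schedule induces a feasible $0/1$ interval assignment of the same energy, because its active intervals are disjoint, cover at least $V(a,b)$ slots in every $[a,b]$ (all such jobs must run inside $[a,b]$ and only during active slots), and overlap each $[r_i,d_i]$; conversely the matching Claim turns any feasible $0/1$ solution into a schedule of equal energy. Hence, writing $\opt$ for the minimum energy schedule, the optimal fractional solution $x$ satisfies $\sum_I x_I(Q+\abs{I})\le\opt$.

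Next I would record the running time. Since interval endpoints are integers in $[0,D]$, there are only $O(D^2)$ interval variables $x_I$; the covering constraints number $O(D^2)$, the packing constraints $O(D)$, and the deadline constraints $n$, while each coefficient $V(a,b)$ or $\abs{I\cap[a,b]}$ is computable in polynomial time. So solving the LP, applying the reordering that eliminates strict containments, and producing the convex decomposition $\set{(\mcC_j,w_j)}_{1\le j\le m}$ (with $m\le\abs{\mcI}=O(D^2)$) are all polynomial in $n$ and $D$, as is the interval-extension procedure of the Lemma.

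Then I would invoke the decomposition Claims together with the extension Lemma. Writing $\mathrm{cost}(\mcC)=\sum_{I\in\mcC}(Q+\abs{I})$, the decomposition gives $\sum_{j=1}^m w_j=1$ with each $\mcC_j$ a set of disjoint intervals and reconstructs $x$ so that $\sum_{j=1}^m w_j\,\mathrm{cost}(\mcC_j)=\sum_I x_I(Q+\abs{I})\le\opt$. The Lemma converts each $\mcC_j$ into a feasible $\mcC'_j$ without changing the number of intervals and increasing total length by at most $P$, so $\mathrm{cost}(\mcC'_j)\le\mathrm{cost}(\mcC_j)+P$. Since the weights are nonnegative and sum to $1$, the minimum is at most the weighted average, giving $\min_j\mathrm{cost}(\mcC_j)\le\opt$ and therefore $\min_j\mathrm{cost}(\mcC'_j)\le\opt+P$. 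Taking the cheapest $\mcC'_j$, which is feasible, and applying the matching Claim to its active slots yields an actual schedule of energy at most $\opt+P$, all computed in time polynomial in $n$ and $D$.

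The main conceptual obstacle, namely extending the intervals of an infeasible $\mcC_j$ so as to restore feasibility while paying only an additive $P$, is precisely what the Lemma already settles; the remaining steps are bookkeeping. The two points I would be most careful to verify are (i) the inequality $\sum_I x_I(Q+\abs{I})\le\opt$, which rests on checking that every feasible schedule induces a feasible LP point of equal energy, and (ii) the polynomiality of the running time, which hinges on there being only $O(D^2)$ candidate intervals so that both the LP size and the number of decomposition pieces remain polynomial in $D$.
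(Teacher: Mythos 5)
Your proof is correct and takes essentially the same route as the paper: solve the same LP, uncross and convexly decompose it into disjoint-interval solutions $\mcC_j$, extend each to a feasible $\mcC'_j$ at additive cost $P$ via the extension lemma, take the cheapest, and convert it into an actual schedule with the Hall-type matching claim. Your extra bookkeeping --- that the LP value lower-bounds the optimum because any feasible schedule induces an LP point of equal energy, and that only $O(D^2)$ candidate intervals exist so the LP, decomposition, and extension all run in time polynomial in $n$ and $D$ --- simply makes explicit what the paper leaves implicit.
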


\remove{We now give a relaxed linear program with variables $x_I$ for each possible interval (block) $I$. We denote the start time of an interval $I$ by $s(I)$ and its endtime by $e(I)$. We use $|I|$ for the length of the interval $I$, i.e., number of time slots in $I$.  \\
\begin{center}
  $min \sum_{I} x_I(|I|+Q)$ 
\begin{equation}\label{suff-volume}
\sum_{I} |I \cap [t_1,t_2]| x_I \ge V(t_1,t_2) \quad \forall t_1,t_2 \in [1,D]
 \end{equation}
 \begin{equation}\label{time}
\sum_{I: t \in I} x_{I} \le 1 \quad \forall t
 \end{equation}
 \begin{equation}\label{one-interval}
\sum_{I: I \cap [r_j,d_j] \neq \emptyset} x_{I} \ge 1 \quad \forall j
 \end{equation}
 
 \end{center}
 Constraint \ref{suff-volume} is natural relaxation for the feasibilty of the schedule in the intervals as described above. 
 We call the  solution of the above linear program as fractional schedule. We denote the optimal fractional schedule by $OPT^f$.

\textbf{Modifying the optimal intervals:}
Consider an optimal fractional solution with weight $x_I > 0$ on interval $I$ for all $I \in \mathcal{I}$. Let $\epsilon = gcd(x_{I_1},\dots,x_{I_r})$ with $I_1,\dots,I_r  \in \mathcal{I}$. First we replace each interval $I$ (of weight $x_I$) with its $x_I/\epsilon$ copies each of weight $\epsilon$. Since $cost = \sum_{I \in \mathcal{I}} x_I(|I| + Q)$, the cost is exactly same as before.  From now on, we assume all the intervals in the support have now weight $\epsilon$. 

\textbf{Ordering the intervals:}
First we do modification that will ensure  no interval is strictly contained in another.  
Consider two interval $I_1 = [a,b]$ and $I_2 = [c,d]$ (with weight $\epsilon$). Let $a < c$ and $d < b$ so that $I_2$ is strictly contained in $I_1$.  We replace $I_1$ and $I_2$ with two new intervals $I_3 = [a,d]$ and $I_4 =[b,c]$ (each with weight $\epsilon$). It is easy to see that left hand side of all constraints (\ref{suff-volume}),(\ref{time}),(\ref{one-interval}) remains same by this modification.

We now give an ordering among the intervals. We order the intervals as per their release time. For the jobs with same release time, we order them as per their deadlines. For jobs with same release and deadline, we choose an arbitrary order among them.

\textbf{Convex Decomposition:} 
We group the intervals into $\bar{N} = 1/\epsilon$ buckets. As per the above ordering, let the intervals be $I_1 < I_2 < \dots $. The bucket $i$ ($1 \le i \le \bar{N}$) consists of intervals $\{I_i, I_{i+\bar{N}},I_{i+2\bar{N}},\dots\}$.  Each bucket $i$ will correspond to an integral schedule $\mathcal{S}_i$.

Ideally we would like  schedule jobs in intervals of optimal fractional solution in such a way that each job is processed in $p_j$ time slots in every bucket (so each bucket will give an integral solution).  However for some or all buckets this may not be possible. We show that by extending the intervals by at most $P$ time units (in each bucket), we can schedule every job in each bucket. Thus we get $t$ integral schedules $\mathcal{S}_1,\mathcal{S}_2,\dots,\mathcal{S}_{\bar{N}}$. This implies that $\epsilon(cost(\mathcal{S}_1)+\dots+cost(\mathcal{S}_{\bar{N}})) \le cost(OPT^f) + P$ and hence  $\min_{i \in [\bar{N}]}cost(\mathcal{S}_i) \le  cost(OPT^f) + P \le  2 cost(OPT^f)$.

\textbf{Extending the intervals:} 
We now do the following procedure for each bucket $i$. We denote the intervals  $I_{i+kt}$ as $I^k$. So the bucket $i$'s intervals in order are $I^1 < I^2 < \dots$. 
\begin{lemma}
\label{interval-in-job}
For any job $j$, there exists an interval $I^k$ such that $I^k \cap [r_j,d_j] \neq \emptyset$
\end{lemma}
\begin{proof}
The way we ordered the intervals, if no interval of this bucket overlaps $[r_j,d_j]$ then at most one interval of other buckets can have overlap with $[r_j,d_j]$. Therefore, the constraint (\ref{one-interval}) is now violated as  $\sum_{I: I \cap [r_j,d_j] \neq \emptyset} x_{I} =  \epsilon (\bar{N}-1) < 1$.
\end{proof}
We start from $t =1$ and proceed right towards $D$.  Let at any arbitrary step we are at time $t$.  We have extended some intervals till now. We maintain the invariant that (i) $\sum_{k} |I^k \cap [t_1,t_2]|  \ge V(t_1,t_2)$ for all $t_1,t_2 \in [1,t]$ and  (ii) total increase in length of intervals is bounded by the sum of processing time of jobs with deadline $ \le t$.

During next step we move from time $t$ to $t+1$. We call this $step(t+1)$. 
If for all $1 \le t_1 \le t$, we have  $\sum_{k} |I^k \cap [t_1,t+1]|  \ge V(t_1,t+1)$ then we are done with this step. We maintain the invariant as we have not  extended the intervals.   Otherwise for some $t_1 = t^1,\dots,t^s$, we have  $V(t_1,t+1) - \sum_{k} |I^k \cap [t_1,t+1]|   = c_{t_1} >  0$.
Let $t_R$ be the rightmost endpoint or start point of any interval such that $t^R \le t$.    Note that for any $t^k$ ($1 \le k \le s$), contributing jobs in $V(t_1,t+1)$ have deadline at $t+1$.
Further we claim that for all $1 \le k \le s$, we have $t^k < t_R$. We prove it by contradiction separately for two cases - $t_R$ is the start point or endpoint of any interval. If $t_R$ is the start point then by Lemma \ref{interval-in-job} we must have $V(t^k,t+1) = 0$ (if $t^k \ge t_R$). But this in not the case as  $0 < V(t^k,t+1) - \sum_{k} |I^k \cap [t_1,t+1]|  = V(t^k,t+1)$. 
Now assume the other case, i.e., if $t_R$ is the endpoint of some interval.  Then if $t^k \ge t^R$ then $V(t^k,t+1) - \sum_{k} |I^k \cap [t^k,t+1]| > 0$ implies that the instance is infeasible.

Now we describe the way we extend the intervals in $step(t+1)$. Let $C = \sum_{k =1}^{s} c_{t_k}$.
Let $I^R$ be the interval whose end point is $t_R$. If $t_R$ is the end point then we extend it to the right till $t+1$ or total increase in length of interval in $step(t+1)$ equals $C$. Note that as $t^k \le t_R$ for all $k$, the increase in length of intervals is equal to  increase in  $\sum_{k} |I^k \cap [t_1,t+1]|$ for all $k$. So if we stop this step because we already added $C$ extra length then we have $\sum_{k} |I^k \cap [t_1,t+1]| \ge V(t_1,t+1)$ for all $t_1 = t^1,\dots,t^s$. Hence invariant holds true. If total increase in length is $< C$ and we hit the some $t+1$ then extend $I^R$ to the left (this would be also the case if $t_R$ is the start point of $I^R$)  till we hit end point of the previous interval or  total increase in length of interval in $step(t+1)$ equals $C$. We continue in this way. At any point  if we stop this step because we already added $C$ extra length then  invariant holds true.

Since in $step(t+1)$, we extended the length of intervals by at most the processing time of jobs with deadline at $t+1$, we do not increase the length of interval at the end by $P$.

}

\section{Deadline Scheduling on Parallel Machines} \label{feasible}
\newcommand{\dsoi}{{\tt deadline-scheduling-\linebreak[2]on-intervals}}
In this section we prove a necessary and sufficient condition for scheduling jobs on $m$ parallel machines so that all release dates and deadlines are met. While this is a standard problem in an undergraduate Algorithms course we repeat the argument here since it will be useful in developing the linear program for minimizing energy consumption in the next section. 

Recall we are given $n$ jobs. Job $j_i, 1\le i\le n$ requires $p_i$ units of processing, is released at time $r_{i}$ and has deadline $d_i$. The jobs are to be scheduled on $m$ identical machines and we allow for preemption and migration. An instance is feasible iff for every job $j_i, 1\le i\le n$ we can assign $p_i$ distinct time-slots during $[r_i,d_i]$ such that no time-slot is assigned to more than $m$ jobs. 

For reasons that will become clear later, we consider a minor generalization of the above problem which we refer to as \dsoi. Instead of $m$ machines, we are given $k$ supply-intervals, $\mcI=\set{I_{1},I_{2}, \ldots, I_{k}}$ and are required to schedule the given jobs within these intervals. Let $s_j,t_j$ denote the start and end-times of interval $I_j$. The intervals in $\mcI$ need not be disjoint; however any point in time is contained in at most $m$ intervals. Note that if each interval in $\mcI$ was $[0,D]$ then we would recover the problem of scheduling on parallel machines. An instance of this problem is thus specified by the processing time, release date and deadline of each of the $n$ jobs and the start and end-times of the $k$ supply-intervals. The feasibility of an instance can be checked by formulating it as a problem of finding a flow in a suitable network. 

\begin{figure}[ht]
\centering
\includegraphics[width=4.5 in]{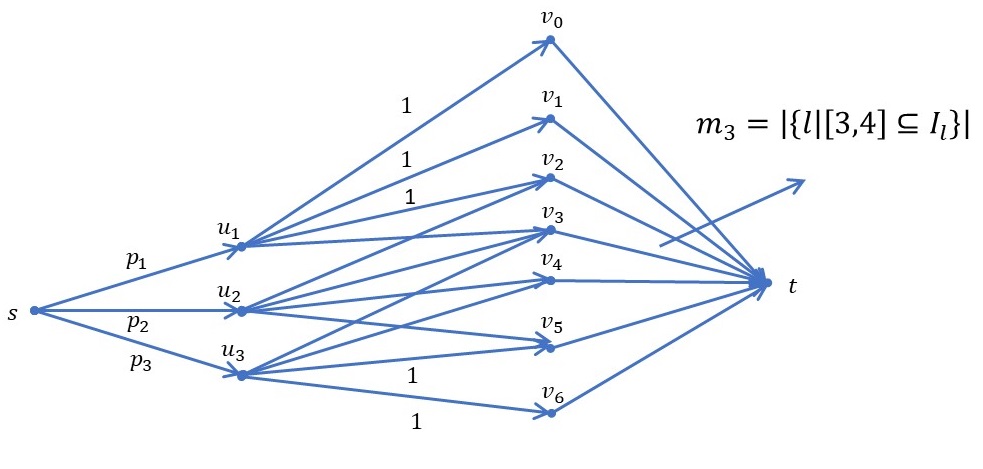}
\caption{Network $G=(V,E)$ for checking feasibility of an instance}
\label{network}
\end{figure}

Construct a network $G=(V,E)$ with source $s$, sink $t$, a vertex $u_i$ for each job $j_i$ and a vertex $v_t$ for each time-slot $[t,t+1], 0\le t\le D-1$. Vertex $u_i$ has edges to vertices \set{v_t | [t,t+1]\subseteq[r_i,d_i]} of capacity 1 and an edge from $s$ of capacity $p_i$. Let $m_t$ be the number of intervals in $\mcI$ which contain the time-slot $[t,t+1]$. Vertex $v_t$ has an edge to the sink $t$ of capacity $m_t$. Let $c:E\rightarrow \mathbb{R}^{+}$ denote the capacity function on the edges. 

The $s$-$t$ cut $(\set{s},V-\set{s})$ has capacity $P=\sum_{j=1}^n p_j$ and so the maximum flow between $s$ and $t$ cannot exceed $P$.
\begin{lemma}\label{le:fesible}
An instance of \dsoi\ is feasible iff $P$ units of flow can be sent from $s$ to $t$ in the network $G$ with capacities given by $c$.
\end{lemma}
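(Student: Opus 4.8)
The plan is to prove both directions by setting up a bijection-style correspondence between feasible \dsoi\ schedules and integral $s$-$t$ flows of value $P$ in $G$. The single fact that makes this work is the integrality theorem for maximum flow: since every capacity in $G$ (namely $p_i$, $1$, and $m_t$) is a non-negative integer, if the maximum flow has value $P$ then there is an \emph{integral} flow of value $P$. I would state this up front, since it is what lets me read a schedule off a flow. I would also note that the upper bound $\text{max-flow}\le P$ is already established in the text via the cut $(\set{s},V-\set{s})$ of capacity $P$, so exhibiting any flow of value $P$ automatically certifies it as maximum.

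For the forward direction I would assume the instance feasible, fix a feasible schedule, and route one unit of flow along the path $s\to u_i\to v_t\to t$ for every job $j_i$ and every time-slot $[t,t+1]$ to which $j_i$ is assigned. I would then verify this is a valid flow: the edge $s\to u_i$ carries exactly $p_i$ units (each job is processed in $p_i$ slots), matching its capacity; each edge $u_i\to v_t$ carries at most one unit, because a job occupies a given time-slot at most once (no job runs in parallel with itself); and each edge $v_t\to t$ carries at most $m_t$ units, because only $m_t$ supply-intervals contain $[t,t+1]$, so at most $m_t$ jobs can be scheduled there. The total flow leaving $s$ is $\sum_{i=1}^n p_i=P$.

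For the reverse direction I would assume $P$ units can be routed from $s$ to $t$ and, by integrality, take the flow to be integral. Because the only edges leaving $s$ are the $s\to u_i$ of capacities $p_i$ summing to exactly $P$, a flow of value $P$ must saturate each of them, so $p_i$ units enter $u_i$. Flow conservation at $u_i$ together with the unit capacities on the edges $u_i\to v_t$ forces exactly $p_i$ of these edges to carry one unit each, and each such edge satisfies $[t,t+1]\subseteq[r_i,d_i]$ by construction of $G$. Assigning $j_i$ to precisely the time-slots whose edge $u_i\to v_t$ carries flow therefore schedules $j_i$ for $p_i$ slots inside its window, and the capacity $m_t$ on $v_t\to t$ guarantees that at most $m_t$ jobs land in any single time-slot, which is exactly the supply available. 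Hence the assignment is a feasible \dsoi\ schedule.

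The only genuinely nontrivial ingredient is the appeal to integrality in the reverse direction: without it a fractional flow of value $P$ need not correspond to any actual schedule. Everything else is a routine check that the three capacity types in $G$ encode precisely the three requirements of feasibility---$p_i$ units of processing per job, at most one slot per job at any instant, and at most $m_t$ jobs per time-slot---so the hard part is conceptual (choosing the network) rather than computational.
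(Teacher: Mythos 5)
Your proposal is correct and follows essentially the same argument as the paper: the same flow--schedule correspondence, the same appeal to flow integrality (justified by the integral capacities $p_i$, $1$, $m_t$), and the same observation that a flow of value $P$ must saturate every source edge because $P$ is the capacity of the cut $(\set{s},V-\set{s})$. The only difference is cosmetic---you present the schedule-to-flow direction first and make the integrality theorem explicit up front, whereas the paper states it in passing.
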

\begin{proof}
Let $f:E\rightarrow\mathbb{Z}_{\ge 0}$ be an $s$-$t$ flow of value $P$. Since edge capacities are integral $f$ can also be assumed to be integral. We use $f$ to determine an assignment of jobs to time-slots. If $f(u_i,v_t)=1$ then we assign job $j_i$ to the time-slot $[t,t+1]$. Since $f(v_t,t)\le m_t$ the number of jobs assigned to time-slot $[t,t+1]$ cannot exceed the number of intervals in $\mcI$ containing this time-slot. Since $f$ has value $P$ which is the capacity of the cut $(\set{s},V-\set{s})$, all edges incident to $s$ are saturated. Hence $f(s,u_i)=p_i$ which implies that job $j_i$ is assigned to exactly $p_i$ time-slots in $[r_i,d_i]$. This assignment of jobs to time-slots is therefore a feasible schedule. 

For the converse, consider a schedule, $\mcS$, which respects release dates and deadlines. We build a flow $f$ from $s$ to $t$ of value $P$. If job $j_i$ is processed in time-slot $[t,t+1]$ in $\mcS$ then $f(u_i,v_t)=1$; since $[t,t+1]\subseteq[r_i,d_i]$, the edge $(u_i,v_t)$ is in $E$ and has capacity 1. The flow on edges entering $t$ and leaving $s$ is determined by conservation. Note that at most $m_t$ jobs could be scheduled in the time-slot $[t,t+1]$ and hence the flow on edge $(v_t,t)$ does not exceed its capacity. Since in schedule $\mcS$, job $j_i$ is processed for $p_i$ units, the flow on edge $(s,u_i)$ equals $p_i$ which implies that the total flow from $s$ to $t$ is $P$.
\end{proof}

Let \cut{S} be an $s$-$t$ cut and $c(S)$ denote its capacity.
\begin{clm}\label{cl:basic}
If $c(S)<P$ then $S\cap \set{v_0,v_1,\ldots,v_{D-1}}\neq\varnothing$.
\end{clm}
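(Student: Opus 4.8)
The plan is to prove the contrapositive: assuming the cut side $S$ contains none of the time-slot vertices, I will show directly that $c(S)\ge P$. Since any $s$-$t$ cut has $s\in S$ and $t\notin S$, the hypothesis $S\cap\set{v_0,v_1,\ldots,v_{D-1}}=\varnothing$ forces $S$ to have the form $\set{s}\cup\set{u_i : i\in A}$ for some subset $A$ of the jobs; indeed the only vertices besides $s$ available to lie in $S$ are the job vertices $u_i$.

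Next I would enumerate the edges crossing from $S$ to $V-S$. There are exactly two types. For each job $i\notin A$ the edge $(s,u_i)$ crosses the cut and contributes its capacity $p_i$. For each job $i\in A$, the vertex $u_i$ lies in $S$ while every time-slot vertex lies outside $S$, so all of the unit-capacity edges $(u_i,v_t)$ with $[t,t+1]\subseteq[r_i,d_i]$ cross the cut; since $[r_i,d_i]$ contains exactly $d_i-r_i$ integer time-slots, these contribute a total of $d_i-r_i$. Summing the two types gives $c(S)=\sum_{i\notin A}p_i+\sum_{i\in A}(d_i-r_i)$.

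Finally I would invoke the basic per-job feasibility condition $p_i\le d_i-r_i$, which holds because a job can only be scheduled inside its own window $[r_i,d_i]$ and so must have a window at least as long as its processing time. This yields $\sum_{i\in A}(d_i-r_i)\ge\sum_{i\in A}p_i$, and therefore $c(S)\ge\sum_{i=1}^n p_i=P$. This contradicts $c(S)<P$ and establishes the claim.

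The argument is short and I do not anticipate a serious obstacle; the one point requiring care is the edge-counting step, namely the observation that a job vertex placed in $S$ must pay the full window length $d_i-r_i$ rather than merely $p_i$. This is precisely where the inequality $p_i\le d_i-r_i$ enters, and it is exactly the reason a cut that avoids the entire time-slot layer can never be cheaper than the trivial cut \cut{\set{s}} of capacity $P$ around the source.
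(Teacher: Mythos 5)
Your proof is correct and follows essentially the same route as the paper: assume $S$ avoids all time-slot vertices, compute $c(S)=\sum_{i:u_i\in S}(d_i-r_i)+\sum_{i:u_i\notin S}p_i$, and invoke the trivial per-job feasibility condition $p_i\le d_i-r_i$ to conclude $c(S)\ge P$. The only cosmetic difference is that you phrase it as a contrapositive while the paper argues by contradiction; the content is identical.
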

\begin{proof}
For contradiction assume that $S$ does not contain any vertex from the set \set{v_0,v_1,\ldots,v_{D-1}}. Then the capacity of the cut \cut{S} is $\sum_{i:u_i\in S} (d_i-r_i) + \sum_{i:u_i\not\in S} p_i$. If for job $j_i$, $d_i-r_i < p_i$ then the instance is trivially infeasible. Hence we assume that $d_i-r_i\geq p_i, 1\le i\le n$, and this implies that the capacity of the cut \cut{S} is at least $\sum_{i=1}^n p_i=P$. 
\end{proof}

We aggregate the time-slots corresponding to vertices in $S\cap \set{v_0,v_1,\ldots,v_{D-1}}$ into a minimal set of intervals, $Q(S)$. No two intervals in $Q(S)$ are overlapping since we could combine them and obtain a smaller set of intervals. Recall that if two intervals share an end-point then we consider them overlapping.  
\newcommand{\fv}[2]{\mbox{\tt fv($#1,#2$)}}
\newcommand{\defi}[1]{\mbox{\tt def($#1$)}}

\begin{definition}
The {\em forced volume} of a job $j_i$ with respect to an interval $[a,b]$, denoted by \fv{j_i}{[a,b]}, is the minimum volume of $j_i$ that must be processed during $[a,b]$ in any feasible schedule. Let $Q$ be a set of disjoint intervals. The {\em forced volume} of job $j_i$ with respect to $Q$ denoted by \fv{j_i}{Q}, is the minimum volume of $j_i$ that must be processed during the intervals in $Q$ in any feasible schedule.
\end{definition}
If $I_1, I_2$ are disjoint intervals then $\fv{j_i}{I_1}+\fv{j_i}{I_2}\leq \fv{j_i}{I_1\cup I_2}$. For instance suppose $I_1=[0,3]$, $I_2=[5,8]$, $r_1=2$, $d_1=6$ and $p_1=3$. Then \fv{j_1}{I_1}=\fv{j_1}{I_2}=0 but \fv{j_1}{I_1\cup I_2}=1. Note that the forced volume of a job $j_i$ with respect to an interval $[a,b]$ is independent of the supply-intervals and depends only $a,b,p_i,r_i$ and $d_i$. For instance, if $r_i<a<d_i<b$ then $\fv{j_i}{[a,b]}=\max(0,r_i+p_i-a)$. Similarly, if $r_i\le a<b \le d_i$ then $\fv{j_i}{[a,b]}=\max(0,p_i-(a-r_i)-(d_i-b))$.

\begin{definition}
Let $Q$ be a set of disjoint intervals. The deficiency of $Q$, denoted by \defi{Q}, is the non-negative difference between the sum of the forced volume of all jobs with respect to $Q$ and the total volume of jobs that can be processed in $Q$. Thus $$\defi{Q}=\max \left( 0,\sum_{i=1}^n \fv{j_i}{Q}-\sum_{t:[t,t+1]\subseteq Q} m_t \right).$$
\end{definition}
Note that deficiency of $Q$ also depends on the supply intervals in the instance. From the above definition it follows that if a set of disjoint intervals, $Q$, has positive deficiency then the instance is infeasible. The following lemma will help us argue the converse. 
\begin{lemma}\label{le:deficiency}
Let \cut{S} be a $s$-$t$ cut in $G$. Then $\defi{Q(S)}+c(S)\geq P$. The inequality holds with an equality if \cut{S} is a minimum $s$-$t$ cut. 
\end{lemma}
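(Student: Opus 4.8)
The plan is to read off the capacity $c(S)$ explicitly from the bipartite-flow structure and compare it with the deficiency term by term. Write $A=\{i:u_i\in S\}$ for the set of jobs on the source side, and recall that by construction the time-slots on the source side are exactly those aggregated into $Q(S)$; that is, $[t,t+1]\subseteq Q(S)$ iff $v_t\in S$. For each job $j_i$ let $o_i=|[r_i,d_i]\setminus Q(S)|$ denote the number of its admissible time-slots that fall outside $Q(S)$. The edges of $G$ crossing from $S$ to its complement are then of three kinds: the source edges $(s,u_i)$ for $i\notin A$, contributing $\sum_{i\notin A}p_i$; the job-to-slot edges $(u_i,v_t)$ with $i\in A$ and $[t,t+1]\not\subseteq Q(S)$, contributing $\sum_{i\in A}o_i$; and the slot-to-sink edges out of $v_t$ for $[t,t+1]\subseteq Q(S)$, contributing $M:=\sum_{t:[t,t+1]\subseteq Q(S)}m_t$. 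Hence $c(S)=\sum_{i\notin A}p_i+\sum_{i\in A}o_i+M$, and therefore $P-c(S)=\sum_{i\in A}(p_i-o_i)-M$.

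Next I would pin down the forced volume. Since only the $o_i$ slots of $[r_i,d_i]$ lying outside $Q(S)$ are available to process $j_i$ outside $Q(S)$, at least $p_i-o_i$ units are forced into $Q(S)$ whenever this is positive; this is precisely the single-job counting bound behind the examples in the text, so $\fv{j_i}{Q(S)}=\max(0,p_i-o_i)$ and $\defi{Q(S)}=\max\bigl(0,\sum_i\max(0,p_i-o_i)-M\bigr)$. The inequality $\defi{Q(S)}+c(S)\ge P$ now follows from two monotone estimates chained together: discarding the (nonnegative) terms for $i\notin A$ gives $\sum_i\max(0,p_i-o_i)\ge\sum_{i\in A}(p_i-o_i)$, and dropping the outer maximum gives $\defi{Q(S)}\ge\sum_i\max(0,p_i-o_i)-M$; combining these yields $\defi{Q(S)}\ge\sum_{i\in A}(p_i-o_i)-M=P-c(S)$.

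For the equality when $\cut{S}$ is a minimum cut, I would fix the job side by a local exchange argument. Moving a single vertex $u_i$ across the cut changes only the contribution of $j_i$, swapping it between $o_i$ (when $u_i\in S$) and $p_i$ (when $u_i\notin S$), while leaving $Q(S)$ and $M$ untouched. Hence minimality forces $o_i\le p_i$ for every $i\in A$ and $o_i\ge p_i$ for every $i\notin A$. Under these conditions each term $\max(0,p_i-o_i)$ vanishes for $i\notin A$ and equals $p_i-o_i$ for $i\in A$, so $\sum_i\max(0,p_i-o_i)=\sum_{i\in A}(p_i-o_i)$ and the expression inside the outer maximum becomes exactly $P-c(S)$. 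Finally, the trivial cut $(\{s\},V\setminus\{s\})$ has capacity $P$, so a minimum cut has $c(S)\le P$, whence $P-c(S)\ge 0$, the outer maximum is inactive, and $\defi{Q(S)}=P-c(S)$, i.e. equality holds.

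I expect the main obstacle to be the careful bookkeeping of the first step: correctly identifying the three families of crossing edges and, in particular, verifying that the slot-side of $S$ coincides with $Q(S)$ so that the middle term is exactly $\sum_{i\in A}o_i$ rather than a coarser overcount, together with justifying the closed form $\fv{j_i}{Q(S)}=\max(0,p_i-o_i)$ for a union of disjoint intervals. Once these identities are established, both the inequality and the minimum-cut equality reduce to elementary manipulations of the maximum function.
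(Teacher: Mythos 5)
Your proof is correct and takes essentially the same approach as the paper: decompose the cut capacity by vertex type, identify the forced volume of job $j_i$ with respect to $Q(S)$ as $\max(0,p_i-o_i)$, and use a vertex-exchange argument to obtain equality for minimum cuts. The only differences are cosmetic --- you compute $c(S)$ exactly instead of lower-bounding it, you replace the paper's max-flow argument (zero flow on edges from $u_i\notin S$ into $S$ forces zero forced volume) by the symmetric exchange $p_i\le o_i$, and you explicitly verify $c(S)\le P$ so that the outer maximum in the definition of deficiency is inactive, a point the paper leaves implicit.
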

\begin{proof}
We consider each vertex in $S$ and count the total capacity of edges in the cut \cut{S} incident to this vertex. \begin{enumerate}
    \item For the source $s$, this quantity is $\sum_{i:u_i\not\in S} p_i$.
    \item Let $u_i\in S$ and $c_i$ be the number of edges from $u_i$ to vertices in $\overline{S}$. If $c_i>p_i$  then $\fv{j_i}{Q(S)}=0$ and if $c_i\le p_i$ then $\fv{j_i}{Q(S)}=p_i-c_i$. Hence $c_i\geq p_i - \fv{j_i}{Q(S)}$.
    \item If $v_t\in S$ then the edge $(v_t,t)$ of capacity $m_t$ is in \cut{S}.
\end{enumerate}
Combining these we get 
\begin{eqnarray*}
c(S) & \ge & \sum_{i:u_i\not\in S} p_i + \sum_{i:u_i\in S} (p_i-\fv{j_i}{Q(S)})+ \sum_{t:[t,t+1]\subseteq Q(S)} m_t\\
& \ge & \sum_{i=1}^n p_i - \sum_{i=1}^n \fv{j_i}{Q(S)})+ \sum_{t:[t,t+1]\subseteq Q(S)} m_t\\
& \ge & P - \defi{Q(S)}
\end{eqnarray*}
which proves the first part of the lemma.

Let \cut{S} be a minimum $s$-$t$ cut. 
\begin{enumerate}
    \item If $u_i\in S$ then $c_i\le p_i$ or else we would have moved $u_i$ to $\overline{S}$ to obtain a cut of smaller capacity. Hence $c_i=p_i-\fv{j_i}{Q(S)}$.
    \item In a maximum $s$-$t$ flow, flow on edge $(u_i,v_t)$, $u_i\not\in S, v_t\in S$, is 0. Since $p_i$ units enter $u_i$, this implies that $\fv{j_i}{Q(S)}=0$ and hence $\sum_{i:u_i\in S} (p_i-\fv{j_i}{Q(S)})=\sum_{i=1}^n \fv{j_i}{Q(S)})$.
\end{enumerate} 
The above two observations imply that $c(S)=P-\defi{Q(S)}$ which proves the second part of the Lemma.
\end{proof}
By Lemma~\ref{le:fesible} an infeasible instance has a cut \cut{S} such that $c(S)<P$.  Lemma~\ref{le:deficiency} then implies that $\defi{Q(S)}>0$ which proves the following theorem.
\begin{theorem}
An instance of \dsoi\ is feasible iff no set of disjoint intervals has positive deficiency.
\end{theorem}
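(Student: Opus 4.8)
The plan is to prove the two implications separately, leaning on the flow characterisation (Lemma~\ref{le:fesible}) and the cut--deficiency identity (Lemma~\ref{le:deficiency}) established above. For the forward direction---feasibility implies that no set of disjoint intervals has positive deficiency---I would argue straight from the definitions, without any appeal to the network. Fix a feasible schedule and an arbitrary set of disjoint intervals $Q$. By the definition of forced volume, each job $j_i$ must process at least \fv{j_i}{Q} of its work inside the slots of $Q$, so the total work performed within $Q$ is at least $\sum_{i=1}^n \fv{j_i}{Q}$. On the other hand, a slot $[t,t+1]\subseteq Q$ is covered by exactly $m_t$ supply-intervals and can therefore host at most $m_t$ units of processing, bounding the work that fits in $Q$ by $\sum_{t:[t,t+1]\subseteq Q} m_t$. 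Chaining the two bounds gives $\sum_{i=1}^n \fv{j_i}{Q}\le \sum_{t:[t,t+1]\subseteq Q} m_t$, i.e.\ $\defi{Q}=0$; since $Q$ was arbitrary, no set of disjoint intervals is deficient.

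For the reverse direction I would prove the contrapositive: every infeasible instance exhibits a deficient set of disjoint intervals. By Lemma~\ref{le:fesible}, infeasibility is equivalent to the maximum $s$-$t$ flow in $G$ falling short of $P$, so by max-flow--min-cut there is a cut \cut{S} with $c(S)<P$. Claim~\ref{cl:basic} guarantees that such a cut contains at least one slot-vertex, so aggregating $S\cap\set{v_0,\ldots,v_{D-1}}$ into its minimal collection of disjoint intervals yields a nonempty, well-defined $Q(S)$. Feeding this cut into Lemma~\ref{le:deficiency}, which asserts $\defi{Q(S)}+c(S)\ge P$, and combining with $c(S)<P$ gives $\defi{Q(S)}\ge P-c(S)>0$. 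Thus $Q(S)$ is a set of disjoint intervals of strictly positive deficiency, exactly as the contrapositive requires.

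Given the two lemmas, the theorem itself is almost immediate, so I do not expect the final step to be the real obstacle; the genuine work has already been absorbed into Lemma~\ref{le:deficiency}, whose careful charging of $c(S)$ across the three kinds of cut edges and re-expression of the job-vertex terms via forced volume is where the content lies. The only two points I would double-check here are (i) confirming in the forward direction that forced volume really does lower-bound the work any feasible schedule places inside $Q$, which is precisely where the definition of \fv{j_i}{Q} as a schedule-independent minimum is used, and (ii) checking that the slot-vertices of a sub-$P$ cut genuinely aggregate into a legitimate family of pairwise non-overlapping intervals, so that deficiency is even defined on $Q(S)$. Both are routine once the supporting lemmas are in hand.
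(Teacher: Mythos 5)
Your proposal is correct and takes essentially the same approach as the paper: the forward direction follows directly from the definitions of forced volume and deficiency, and the reverse direction is the contrapositive obtained by chaining Lemma~\ref{le:fesible} (infeasibility yields a cut of capacity below $P$), Claim~\ref{cl:basic} (such a cut contains slot vertices, so $Q(S)$ is well defined and nonempty), and Lemma~\ref{le:deficiency} (so $Q(S)$ has positive deficiency). The only difference is presentational: you spell out the counting argument for the forward direction that the paper leaves implicit after the definition of deficiency.
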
  
{\bf Making an instance feasible}:
Given an infeasible instance of \dsoi, we would like to extend the intervals of the instance to make it feasible. We need some additional tools to do this and shall take this up in a later section. Let $F<P$ be the maximum $s$-$t$ flow in the network $G$ corresponding to this instance. We now show that an $s$-$t$ flow of value $P$ can be routed in $G$ by increasing capacities of edges incident to the sink such that the total increase in capacities is $P-F$.

By submodularity of the cut-function it follows that if \cut{S_1},\cut{S_2} are minimum $s$-$t$ cuts then \cut{S_1\cap S_2} is also a minimum $s$-$t$ cut. Hence a minimum $s$-$t$ cut in which the side containing the source is minimal is unique; let \cut{S} be this cut. Since the capacity of this cut is less than $P$, by Claim~\ref{cl:basic} it follows that $S\cap\set{v_0,v_1,\ldots,v_{D-1}}\neq\varnothing$.
\begin{clm}\label{cl:inc}
Increasing the capacity of {\em any} edge $(v_i,t), v_i\in S$ by 1 increases the $s$-$t$ max-flow in $G$ by 1. 
\end{clm}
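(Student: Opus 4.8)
The plan is to reduce the statement to a claim about minimum cuts and then invoke the max-flow min-cut theorem: it suffices to show that increasing the capacity of $(v_i,t)$ by one raises the minimum $s$-$t$ cut value by exactly one, since the max-flow equals the min-cut. Let $G'$ denote the modified network and recall that $c(S)=F$ is the minimum cut value in $G$. First I would dispose of the easy upper bound. Because $v_i\in S$ while the sink $t\in\overline{S}$, the edge $(v_i,t)$ crosses $\cut{S}$ and hence contributes its capacity to $c(S)$ (this is exactly item~3 in the proof of Lemma~\ref{le:deficiency}). Therefore the capacity of $\cut{S}$ in $G'$ is $c(S)+1=F+1$, so the minimum cut in $G'$ is at most $F+1$.

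The substance of the argument is the matching lower bound, and this is where the minimality of $S$ is essential. Suppose, for contradiction, that $G'$ admits a cut $\cut{T}$ of capacity at most $F$, and split into two cases according to the side of $T$ on which $v_i$ lies. If $v_i\in T$, then $(v_i,t)$ crosses $\cut{T}$, so its capacity increased by one in passing from $G$ to $G'$; hence the capacity of $\cut{T}$ in $G$ is at most $F-1$, contradicting that $F$ is the minimum cut value of $G$. If instead $v_i\in\overline{T}$, then both endpoints of $(v_i,t)$ lie in $\overline{T}$, so this edge does not cross $\cut{T}$ and the capacity of $\cut{T}$ is unchanged between $G$ and $G'$; thus $\cut{T}$ is a cut of $G$ of capacity at most $F$, i.e.\ a minimum cut of $G$. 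But $\cut{S}$ was chosen to be the \emph{unique} minimum cut whose source side is minimal, so $S$ is contained in the source side of every minimum cut, giving $S\subseteq T$ and therefore $v_i\in S\subseteq T$, contradicting $v_i\in\overline{T}$.

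Both cases being impossible, the minimum cut of $G'$ is strictly greater than $F$, hence at least $F+1$; combined with the upper bound it equals $F+1$, and the max-flow increases by exactly one, proving Claim~\ref{cl:inc}. The main obstacle, and the only nontrivial point, is the lower bound: a priori some other cut could absorb the extra unit of capacity, leaving the min-cut value unchanged. The key observation that defeats this is that $v_i$ must lie on the source side of \emph{every} minimum cut, which is precisely guaranteed by the minimality (and uniqueness) of $S$ established just before the claim via submodularity of the cut function. Without that minimality the statement would genuinely fail for a vertex $v_i$ that sits on the source side of one minimum cut but on the sink side of another.
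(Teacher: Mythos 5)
Your proof is correct and follows essentially the same route as the paper: the paper also argues by contradiction that if the max-flow did not increase there would be a minimum cut $\cut{X}$ of $G$ with $v_i\notin X$, contradicting the minimality of $S$ (which, via submodularity, forces $S$ to lie inside the source side of every minimum cut). Your write-up merely makes explicit the case analysis and the upper bound $F+1$ that the paper leaves implicit.
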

\begin{proof}
For contradiction assume that increasing the capacity of edge $(v_i,t)$ does not increase the $s$-$t$ max-flow in $G$. Hence there is a minimum $s$-$t$ cut, \cut{X}, such that $v_i\not\in X$. Since $v_i\in S$, this means $S\not\subseteq X$ which implies that $S$ is not minimal. 
\end{proof}
Claim~\ref{cl:inc} gives us an algorithm for increasing capacities. At each step we find a minimum $s$-$t$ cut in which the side containing the source is minimal and increase the capacity of any edge in this cut which is also incident to the sink by 1. Since with every step, we increase the $s$-$t$ flow in $G$ by 1, the number of steps, and the total increase in edge capacities, equals $P-F$.

Claim~\ref{cl:inc} also implies that \cut{S} remains a minimum $s$-$t$ cut in $G$ even after we increase the capacity of edge $(v_i,t), v_i\in S$, by 1; however $S$ need not be minimal. Let \cut{S'} be the new $s$-$t$ minimum cut in which the side containing the source is minimal. The fact that \cut{S} is a minimum $s$-$t$ cut implies that $S'\subseteq S$. Thus with every step the $s$-side of the cut under consideration shrinks. This is an important property of this process and shall find use later.

\section{Linear Programming Relaxation}
\label{sec:lp}
We are now ready to give a linear programming relaxation for the problem of scheduling jobs on parallel machines so as to minimize total energy consumed. A solution to the problem is completely specified by the set of time intervals in which each machine is active; let $\mcI$ be this multiset. The energy consumed by this solution equals $\sum_{I\in\mcI} (\abs{I}+Q)$. Note that at most $m$ intervals in $\mcI$ can overlap at any point in time. Further, $\mcI$ forms a feasible solution if the corresponding instance of \dsoi\ is feasible.  

With every interval $I\subseteq [0,D]$ we associate a variable $x(I), 0\le x(I)\le m$ which indicates the number of times $I$ is picked in a solution. The objective is to minimize $\sum_I x(I)(\abs{I}+Q)$. We now list the constraints of this linear program.
\begin{enumerate}
\item Let $m_t=\sum_{I:[t,t+1]\subseteq I} x(I)$. Since at most $m$ intervals overlap at any time $t$ we get that for all $t, 0\le t\le D-1$, $m_t \le m$.
\item Let $f(i,t)$ be a variable denoting the flow in the edge $(u_i,v_t), 0\le i\le n, 0\le t\le D-1$ in the flow network $G$ corresponding to this instance. Then $0\le f(i,t)\le 1$.
\item The conservation constraint on vertex $v_t$ and the capacity constraint on edge $(v_t,t)$ together give: for all $t, 0\le t\le D-1$, $\sum_{i: [t,t+1]\subseteq[r_i,d_i]} f(i,t) \le m_t$.
\item Since $P$ units of flow have to be routed, all edges incident to the source are saturated. This together with the conservation constraint at vertex $u_i$ yields: for all $i$, $\sum_{t=r_i}^{d_i-1} f(i,t)=p_i$.
\item Consider an interval $[a,b]\subseteq[0,D]$. The total forced volume of all jobs with respect to $[a,b]$ equals $\sum_{i=1}^n \fv{j_i}{[a,b]}$. If this quantity equals $\alpha(b-a)$ then the number of intervals overlapping $[a,b]$ should be at least \ceil{\alpha}. This yields the constraint: for all $0\le a < b\le D$, $$\sum_{I:[a,b]\cap I\neq\varnothing} x_I \geq \bigg\lceil\frac{\sum_{i=1}^n \fv{j_i}{[a,b]}}{b-a}\bigg\rceil.$$
\end{enumerate}
Thus our linear program for scheduling on multiple machines to minimize energy is as follows.
\begin{lp}{minimize}{\sum_I x(I)(\abs{I}+Q)}
\cnstr{m_t}{=}{\sum_{I:[t,t+1]\in I} x(I)}{0\le t\le D-1}
\cnstr{m_t}{\ge}{\sum_{i: r_i\le t\le d_i-1} f(i,t)}{0\le t\le D-1}
\cnstr{p_i}{=}{\sum_{t=r_i}^{d_i-1} f(i,t)}{1\le i\le n}
\cnstr{\sum_{I:[a,b]\cap I\neq\varnothing} x_I}{\ge}{\big\lceil\sum_{i=1}^n \fv{j_i}{[a,b]}/(b-a)\big\rceil}{0\le a < b\le D}
\cnstr{f(i,t)}{\in}{[0,1]}{1\le i\le n, 0\le t\le D-1}
\cnstr{x(I),m_t}{\in}{[0,m]}{0\le t\le D-1, I\subseteq[0,D]}
\end{lp}
\remove{
Let $I_{1},I_{2},\ldots,I_{k}$ be the set of all possible intervals. Given $\{ x_{I_{l}}\}_{l=1}^{l=k}$, condition 2 can be enforced by simply writing, for all time period $[t,t+1]$, the constraint : $\sum_{l:[t,t+1] \subseteq I}x_{I_{l}}\leq m$. Condition 1 can be checked by formulating a maximum flow problem as shown in Section \ref{feasible}. The maximum flow problem can be captured by a linear program. Let $x_{Si}$ be the flow on edge $(S,j_{i})$, $x_{ik}$ be the flow on edge $(j_{i},t_{k})$ and $x_{kT}$ be the flow on edge $(t_{k},T)$. There exists a maximum flow of value $\sum_{i=1}^{n}p_{i}$ if and only if there exist a feasible integral solution to the following: \\

\begin{equation}
    \sum_{l=1}^{n} x_{Sl}=\sum_{j=1}^{n}p_{l}, \hspace{10 pt} l \in [1,n]
\end{equation}

\begin{equation}
    0 \leq x_{Sl} \leq p_{l}, \hspace{10 pt} l \in [1,n]
\end{equation}

\begin{equation}
    0 \leq x_{lk} \leq 1, \hspace{10 pt} l \in [1,n], k\in [r_{l},d_{l}]
\end{equation}
\begin{equation}
    \sum_{k=r_{l}}^{d_{l}}x_{lk} = p_{l}, \hspace{10 pt} l \in [1,n]
\end{equation}
\begin{equation}
    \sum_{l:[k,k+1] \subseteq [r_{l},d_{l}]}x_{lk} \leq  \sum_{l:[k,k+1] \subseteq I}x_{I_{l}}, \hspace{10 pt} k \in [1,D-1]
\end{equation}

Given any job $i$, at least one interval must be overlapping with $[r_{i},d_{i}]$. This condition is not captured by current set of constraints and hence we need to add it separately (see constraint \ref{Constraint: AtleastOne}). We now give a linear programming relaxation for minimum energy scheduling. 

	\begin{center}
		\begin{boxedminipage}{5.5in}
			\begin{equation}
\centering \min \hspace{10 pt} \sum_{l=1}^{k} x_{I_{l}}(|I_{l}|+Q) 
\end{equation}
    
\begin{equation}
    \sum_{k=r_{l}}^{d_{l}}x_{lk} = p_{l}, \hspace{10 pt} l \in [1,n]
\end{equation}
\begin{equation}
    \sum_{l:[k,k+1] \subseteq [r_{l},d_{l}]}x_{lk} \leq  \sum_{l:[k,k+1] \subseteq I_{l}}x_{I_{l}}, \hspace{10 pt} k \in [1,D-1]
\end{equation}

\begin{equation}
    \sum_{l:[t,t+1]\subseteq I_{l}}x_{I_{l}} \leq m,  \hspace{10 pt} k \in [1,D-1]
\end{equation}

\begin{equation} \label{Constraint: AtleastOne}
    \sum_{l:I_{l} \cap [r_{i},d_{i}] \neq \phi} x_{I_{l}} \geq 1, \hspace{10 pt} i \in [1,n]
\end{equation}
\begin{equation}
    0 \leq x_{lk} \leq 1, \hspace{15 pt} l \in [1,n], k\in [r_{l},d_{l}]
\end{equation}
\begin{equation}
    x_{I_{l}} \geq 0 \hspace{10 pt} l\in [1,k]
\end{equation}

			\end{boxedminipage}
			\label{LP}

			\end{center}
		}
\section{Minimizing Energy on Parallel Machines}
\label{sec:multiple}
Our algorithm for the case of parallel machines is along the lines of the one for single machines. We begin by solving the linear program from Section~\ref{sec:lp} and let $x$ be the optimum fractional solution and $\opt$ the cost of this solution. Our algorithm will produce a solution of cost at most $2\opt+P$.

Let $\mcI=\set{I|x_I>0}$. After ensuring that no interval of $\mcI$ is strictly contained in another, we order the intervals by increasing start-times (breaking ties using end-times) and let $\prec$ be this order. As in Section~\ref{sec:single}, we construct $r$ integral solutions, $\mcC_i, 1\le i\le r$ and associate weights $w_i$ with solutions $\mcC_i$ such that this forms a convex decomposition of $x$. Note that $\mcC_i$ is no more a disjoint set of intervals as in the single machine case. However at most $m$ intervals of $\mcC_i$ could overlap at any point in time.

For the rest of this section we will consider one of the integral solutions in the convex decomposition and refer to it as $\mcC$. The arguments of this section will apply to all $r$ solutions. Note that $\mcC$ need not be a feasible instance of \dsoi\ and we will modify the intervals in $\mcC$ to make it a feasible solution. Let $I_1\prec I_2\prec\cdots\prec I_N$ be the intervals in $\mcC$.

\begin{lemma}
\label{overlap-bound}
Suppose $[a,b]\subseteq[0,D]$ overlaps $l$ intervals of $\mcC=\mcC_i$. Then $[a,b]$ overlaps at most $l+1$ intervals of $\mcC_k, k\neq i$. 
\end{lemma}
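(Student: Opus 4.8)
The plan is to derive the bound from a \emph{balanced round-robin} property of the convex decomposition, after first establishing that the intervals overlapping a fixed window form a contiguous block in the order $\prec$. Having removed all strict containments and then sorted by start-time with end-time as tie-breaker, both endpoints are non-decreasing along $\prec$: if $[s,e]\prec[s',e']$ then $s\le s'$, and $e>e'$ is impossible since it would give $s\le s'\le e'<e$, i.e.\ a strict containment $[s',e']\subset[s,e]$ (the case $s=s'$ is settled by the end-time tie-break). Hence, for the window $[a,b]$, the overlap condition for $I_j=[s_j,e_j]$, namely $s_j\le b$ and $e_j\ge a$, selects a prefix of $\prec$ (from $s_j\le b$) intersected with a suffix (from $e_j\ge a$). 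The overlapping intervals therefore form a contiguous block $I_p\prec I_{p+1}\prec\cdots\prec I_q$, and this block depends only on $[a,b]$ and the common order $\prec$, not on the particular solution.

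Next I would make the cyclic structure of the decomposition explicit. With prefix sums $P_0=0$ and $P_j=\sum_{i\le j}x(I_i)$, the construction of Section~\ref{sec:single} places $I_j$ on the segment $[P_{j-1},P_j)$ of length $x(I_j)$ and, for an offset $\theta\in[0,1)$, takes $I_j$ into the corresponding solution with multiplicity equal to the number of points of the shifted lattice $\theta+\mathbb{Z}$ lying in $[P_{j-1},P_j)$; the finitely many distinct solutions obtained as $\theta$ varies are exactly the $\mcC_k$ (multiplicities now appear because $x(I)$ may exceed $1$). Summing these multiplicities over the contiguous block telescopes: the number of intervals of the solution with offset $\theta$ that overlap $[a,b]$ equals the number of lattice points of $\theta+\mathbb{Z}$ in the single segment $[P_{p-1},P_q)$, whose length is $\Sigma:=\sum_{j=p}^q x(I_j)$.

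Finally I would invoke the elementary fact that an interval of length $\Sigma$ contains either $\lfloor\Sigma\rfloor$ or $\lceil\Sigma\rceil$ points of a unit-spaced lattice, regardless of the offset $\theta$. Thus the number of intervals of any solution overlapping $[a,b]$ lies in $\{\lfloor\Sigma\rfloor,\lceil\Sigma\rceil\}$, and these two values differ by at most one. In particular, if $\mcC=\mcC_i$ realizes $l$ overlapping intervals then $l\ge\lfloor\Sigma\rfloor$, so every other $\mcC_k$ has at most $\lceil\Sigma\rceil\le\lfloor\Sigma\rfloor+1\le l+1$ overlapping intervals, which is exactly the claim.

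The contiguity step and the lattice-counting step are routine; the one place demanding care is the middle step, namely pinning down precisely how the decomposition distributes a contiguous block of intervals across the solutions and verifying that the telescoping and the treatment of multiplicities (when $x(I)>1$) behave as asserted. Once this cyclic description is secured, the $\pm 1$ balance, and hence the lemma, is immediate.
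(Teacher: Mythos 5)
Your proof is correct and takes essentially the same route as the paper's: both rest on (i) the observation that, once strict containments are removed, the intervals overlapping $[a,b]$ form a contiguous block in the order $\prec$, and (ii) the round-robin interleaving structure of the convex decomposition, which the paper states as ``$\mcC_k$ contains exactly one interval between consecutive intervals of $\mcC_i$'' and you state as unit-lattice point counting over the prefix-sum segment $[P_{p-1},P_q)$. Your lattice formulation is simply a more explicit rendering of the same balance property (and handles shared intervals and multiplicities $x(I)>1$ uniformly), giving the identical conclusion $\lceil\Sigma\rceil\le\lfloor\Sigma\rfloor+1\le l+1$.
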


\begin{proof}
From our round-robin procedure for assigning intervals to solutions in the convex decomposition it follows that for any $1\le i\le N-1$, $\mcC_k$ contains exactly one interval $I$ between $I_i$ and $I_{i+1}$ i.e. $I_i\prec I\prec I_{i+1}$. Suppose $[a,b]$ overlaps intervals $I_j, I_{j+1},\ldots, I_{j+l-1}$ of $\mcC$. Then $[a,b]$ would definitely overlap the $l-1$ intervals of $\mcC_k$ between $I_j$ and $I_{j+l-1}$. In addition $[a,b]$ could possibly overlap the two intervals of $\mcC_k$ between $I_{j-1}$ and $I_j$ and between $I_{j+l-1}$ and $I_{j+l}$. Thus $[a,b]$ could overlap at most $l+1$ intervals of $\mcC_k$.
\end{proof}

{\bf Modifying intervals}: Let $s_j,e_j$ denote the start and end times of interval $I_j\in\mcC$. We consider the intervals in the order $\prec$ and modify them as follows:
\begin{quote}
    If $I_j, I_{j+1}$ overlap then replace $I_j$ with the interval $[s_j,e_{j+1}]$. Else create a copy of $I_{j+1}$ if it does not overlap $I_{j+m}$. 
\end{quote} 
For $j=0$ we add a copy of $I_1$ if it does not overlap $I_m$. The set of intervals formed through this modification continue to have the property that no interval is strictly contained in another although now we could have two copies of some intervals. Let $I'_1\prec I'_2\prec\cdots\prec I'_M$ be the new (multi)set of intervals which we denote by $\mcC'$. 
\begin{clm}\label{cl:adding}
The sets $\mcC$ and $\mcC'$ relate as:
\begin{enumerate}
\item The total length of the intervals in $\mcC'$ is at most twice the total length of intervals in $\mcC$.
\item The number of intervals in $\mcC'$ is at most twice the number of intervals in $\mcC$.
\item If $[a,b]\subseteq[0,D]$ overlaps $0<l<m$ intervals of $\mcC$ then it overlaps at least $l+1$ intervals of $\mcC'$.
\item At most $m$ intervals of $\mcC'$ overlap at any point in time.
\end{enumerate}
\end{clm}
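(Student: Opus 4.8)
The plan is to analyze the single left-to-right sweep that builds $\mcC'$ from $\mcC$ and to prove the four parts in the order (1), (2), (4), (3); the only genuine difficulty is (3), and it rests on the structure exposed by the earlier parts. Two preliminary facts drive everything. Since $\mcC$ has no interval strictly contained in another and is sorted by $\prec$, both the start-times $s_1\le s_2\le\cdots$ and the end-times $e_1\le e_2\le\cdots$ are non-decreasing along $\prec$ (exactly the observation made when $\prec$ was introduced). Consequently, for any window $[a,b]$ the intervals of $\mcC$ meeting it are precisely those with $s_k\le b$ and $e_k\ge a$, and monotonicity makes this index set contiguous: they form a block $I_j\prec I_{j+1}\prec\cdots\prec I_{j+l-1}$. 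I read the step processing $(I_j,I_{j+1})$ as follows: if they overlap, lengthen the left interval out to $e_{j+1}$; if they do not, insert one copy of $I_{j+1}$; in either case I take the clause involving $I_{j+m}$ to cap the lengthened/new interval at the first slot where it would become the $(m{+}1)$-st interval through that slot. So each step either lengthens one present interval or adds at most one new (capped) interval.

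\textbf{Parts (1) and (2).}
These follow by per-step accounting. In the overlap case the replaced interval gains length $e_{j+1}-e_j$, and since overlap gives $s_{j+1}\le e_j$ this is at most $e_{j+1}-s_{j+1}=\abs{I_{j+1}}$; in the non-overlap case the copy contributes length at most $\abs{I_{j+1}}$. Thus step $j$ adds at most $\abs{I_{j+1}}$, and the initial $j=0$ step at most $\abs{I_1}$, so the total increase in length is at most $\sum_k\abs{I_k}$, the length of $\mcC$, giving (1). Since no step creates more than one interval and there are $N$ steps, the count at most doubles, giving (2).

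\textbf{Part (4).}
I would maintain the invariant that at most $m$ intervals cover any slot. The coverage at a slot can rise only through a copy or an extension crossing it, and the clause phrased via $I_{j+m}$ is exactly the statement that such an interval is cut before the first slot at which $I_{j+1},\ldots,I_{j+m}$ coincide; by monotonicity these are $m$ original intervals through that slot, so every slot a new or lengthened interval actually covers carried at most $m-1$ original intervals and still carries at most $m$. This step is routine once the $I_{j+m}$ condition is read as a cap.

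\textbf{Part (3), the crux.}
The block $I_j,\ldots,I_{j+l-1}$ survives (extensions only enlarge its members), so $\mcC'$ already meets $[a,b]$ in $l$ intervals and it suffices to produce one more. I would split on the left boundary of the block. If $I_{j-1}$ meets $I_j$, the overlap-step lengthens $I_{j-1}$ rightward toward $e_j\ge a$; since $s_{j-1}\le b$, the lengthened interval now meets $[a,b]$ although $I_{j-1}$ did not before (it lay strictly left of $a$), giving the extra interval. If $I_{j-1}$ does not meet $I_j$, the step inserts a copy of $I_j$, whose cap sits where $I_{j+m-1}$ begins, at a slot $\ge s_{j+l}>b$, so the copy still covers the whole window-portion of $I_j$ and meets $[a,b]$. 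The delicate point, and where I expect the real work, is the interaction with part (4): I must argue that the cap forced by the $\le m$ bound never truncates the new interval short of $[a,b]$. This reduces to the hypothesis $l<m$: any slot forcing a cut is one where $m$ intervals coincide, and such a slot cannot lie inside $[a,b]$, since within the window at most $l<m$ intervals coincide. Pinning down ``the truncation slot lies strictly to the right of $b$'' uniformly — including the boundary case $j=1$ (handled by the explicit $j=0$ copy rule) and the case where the block spans several distinct overlap-runs — is the main obstacle, and I would treat it by pushing the search to whichever block boundary or interior non-overlapping pair is free of an $m$-fold coincidence, which $l<m$ guarantees to exist.
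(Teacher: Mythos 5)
First, the procedure you analyze is not the paper's. In the paper nothing is ever truncated: when $I_j$ and $I_{j+1}$ overlap, $I_j$ is always replaced by the \emph{full} interval $[s_j,e_{j+1}]$, and when they do not overlap, a \emph{full} copy of $I_{j+1}$ is added if and only if $I_{j+1}$ does not overlap $I_{j+m}$ --- the clause about $I_{j+m}$ is an all-or-nothing condition for creating the copy, not a cap. Your ``cap at the first slot where the interval would be the $(m+1)$-st'' changes the object $\mcC'$ that the claim is about. Under the paper's actual rules your parts (1), (2) and the trichotomy for part (3) (left neighbour overlaps / does not overlap / $j=1$) do track the paper's proof: the uncapped extension of $I_{j-1}$ always reaches $e_j\ge a$, so that case needs no further work, and part (4) is attributed to the copy condition. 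What the paper needs instead of your truncation analysis is that the copy is actually \emph{created} when it is needed, i.e.\ that $l<m$ forces $I_j$ not to overlap $I_{j+m-1}$; this is precisely the assertion its proof invokes, and note that your own observation $s_{j+m-1}\ge s_{j+l}>b$ does not yield it (one would need $e_j<s_{j+m-1}$, and $e_j$ may well exceed $b$).

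Second, even granting your capped reading, the step you yourself flag as ``the main obstacle'' is left open, and the reduction you sketch aims at the wrong region. The danger is not a cap \emph{inside} $[a,b]$ (there coverage is at most $l<m$, as you say); it is a cap in the slots strictly \emph{left of} $a$, between the origin of the new material ($e_{j-1}$ for an extension, $s_j$ for a copy) and $a$: if the working multiset already covers such a slot $m$ times, your interval dies before ever touching the window. Closing this needs two facts absent from the proposal: (i) a slot $[t,t+1]$ with $t<a$ in that range that is $m$-fold covered by $\mcC$ must be covered by $m$ consecutive intervals of index at least $j$ (intervals of index $\le j-1$ end by $e_{j-1}\le t$), and the last of these, of index at least $j+l$, would satisfy $s\le t<a$ and $e\ge e_j\ge a$ and hence overlap $[a,b]$, contradicting $l<m$; and (ii) since your cap is computed against the \emph{working} multiset, you must check that material added at earlier steps never reaches past $e_{j-1}$ (it does not: an extension made at step $k\le j-2$ stops by $e_{k+1}\le e_{j-1}$, and a copy of $I_k$, $k\le j-1$, lies inside $I_k$), so working coverage equals $\mcC$-coverage there. ``Pushing the search to whichever boundary is free of an $m$-fold coincidence'' is a hope, not an argument; with (i) and (ii) supplied, your capped variant does satisfy all four statements, but as written the proposal neither proves the paper's claim about the paper's $\mcC'$ nor completes the proof for its own.
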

\begin{proof}
The first 2 statements follow from our procedure for constructing $\mcC'$. The final statement of the claim follows from the fact that we add a copy of interval $I_{j+1}$ only if it does not overlap $I_{j+m}$.

To prove the third statement, suppose $[a,b]$ overlaps intervals $I_j, I_{j+1},\ldots, I_{j+l-1}$ of $\mcC$. Then $[a,b]$ would also overlap the corresponding intervals of $\mcC'$. Since $l<m$, $I_j$ does not overlap $I_{j+m-1}$.

Let $j>1$. If $I_{j-1}$ overlaps $I_j$ then $[a,b]$ would also overlap the interval in $\mcC'$ that replaced $I_{j-1}$. If $I_{j-1}$ does not overlap $I_j$ then $\mcC'$ would contain a copy of $I_j$ which $[a,b]$ would overlap. Finally if $j=1$ then we would have created a copy of $I_1$ in $\mcC'$ which $[a,b]$ would overlap. Thus $[a,b]$ would overlap at least $l+1$ intervals of $\mcC'$.
\end{proof}

{\bf Extending Intervals}: We will now extend intervals in $\mcC'$, without creating any new ones, to obtain a feasible instance of \dsoi. We begin by running the feasibility test of Section~\ref{feasible} on the instance whose supply-intervals are the intervals of $\mcC'$. Suppose the test fails and returns a set of intervals $Q=\set{Q_1,Q_2,\ldots Q_k}$ of maximum deficiency. Let $I'\in\mcC'$ be such that it overlaps $Q_i$ without containing $Q_i$ i.e. $Q_i\not\subseteq I'$. Then a time-slot in $Q_i$ can be used to extend $I'$ and doing this decreases the deficiency of $Q$ by 1. Recall that this also decreases the maximum deficiency of any set of intervals by 1. We modify the intervals in $\mcC'$ in this manner, always extending an interval of $\mcC'$ by a time-slot contained in one of the intervals comprising the set of intervals with maximum deficiency. We stop when it is not possible to extend an interval of $\mcC'$ in this manner and will now argue that the $\mcC'$ thus obtained is a feasible instance of \dsoi.

The intervals comprising $Q$ shrink during the above procedure and let \set{Q_1,Q_2,\ldots, Q_k} be the set of intervals with maximum deficiency when we stop. Let $Q_i=[a_i,b_i]$ and $c_i$ be the number of intervals of $\mcC'$ which overlap $Q_i$.
\begin{lemma}\label{le:key}
The number of intervals in $\mcC_j, 1\le j\le r$ which overlap $Q_i$ is at most $c_i$.
\end{lemma}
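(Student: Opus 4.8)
The plan is to keep the single solution $\mcC$ (the one from which $\mcC'$ was built) fixed, write $l$ for the number of intervals of $\mcC$ that overlap $Q_i$, and prove the two–sided estimate that the number of intervals of \emph{every} $\mcC_j$ overlapping $Q_i$ is at most $l+1$, while at the same time $c_i\ge l+1$. Since $\mcC$ is itself one of the $\mcC_j$, these two bounds sandwich every decomposition solution between $l$ and $c_i$ and yield the claim. The genuine content is almost entirely in controlling the two boundary regimes $l=0$ and $l\ge m$, where the clean inequality $c_i\ge l+1$ is not directly available.

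First I would record what the stopping rule buys us. The procedure halts precisely when no interval of $\mcC'$ overlaps any $Q_i$ without containing it; hence every interval of $\mcC'$ that overlaps $Q_i$ in fact contains $Q_i$. Consequently $c_i$ equals the number of intervals of $\mcC'$ that \emph{contain} $Q_i$, each slot of $Q_i$ is covered by exactly $c_i$ intervals of $\mcC'$, and by Claim~\ref{cl:adding}(4) (at most $m$ intervals overlap any point) we get $c_i\le m$. I would also note the cheap lower bound $c_i\ge l$: the \emph{Modifying} step only lengthens intervals or adds copies, and the \emph{Extending} step only lengthens intervals, so the $l$ intervals of $\mcC$ meeting $Q_i$ survive as $l$ distinct intervals of $\mcC'$ that still meet (hence contain) $Q_i$.

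Next I would dispose of the two boundary regimes. If $l=0$: since $Q_i$ has positive deficiency, $\sum_{i}\fv{j_i}{Q_i}>0$, so the last constraint of the linear program forces $\sum_{I:\,I\cap Q_i\neq\varnothing}x_I\ge 1$; but this sum equals $\sum_j w_j\cdot(\text{number of intervals of }\mcC_j\text{ meeting }Q_i)$, and by Lemma~\ref{overlap-bound} each of these counts is at most $l+1=1$ while the count for $\mcC$ itself is $0$, so the weighted sum is at most $1-w_{\mcC}<1$, a contradiction; hence $l\ge 1$. If $l\ge m$: the bound $c_i\ge l$ together with $c_i\le m$ forces $c_i=m$, so every slot of $Q_i$ is covered $m$ times, the volume available in $Q_i$ is $m\abs{Q_i}$, and therefore $\defi{Q_i}\le 0$, again contradicting positive deficiency; hence $1\le l<m$.

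In the remaining range $1\le l<m$ the estimate is immediate: Claim~\ref{cl:adding}(3) says that $Q_i$ overlaps at least $l+1$ intervals of $\mcC'$ already before extension, and since extension never deletes or merges intervals this count only grows, so $c_i\ge l+1$; on the other hand Lemma~\ref{overlap-bound} gives that $Q_i$ overlaps at most $l+1$ intervals of each $\mcC_j$. Combining, the number of intervals of $\mcC_j$ overlapping $Q_i$ is at most $l+1\le c_i$, which is the claim. The main obstacle is exactly the boundary bookkeeping of the previous paragraph: one must lean on the stopping rule (so that $c_i$ counts containments and is bounded by $m$), on the monotonicity of both the modification and the extension (to obtain $c_i\ge l$ and to carry Claim~\ref{cl:adding}(3) across the extension step), and on the covering/deficiency constraints (to exclude $l=0$ and $l\ge m$); the generic inequality itself is then nothing more than Lemma~\ref{overlap-bound} played against Claim~\ref{cl:adding}(3).
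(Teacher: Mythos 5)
Your middle regime $1\le l<m$ is handled exactly as in the paper (Claim~\ref{cl:adding}(3) played against Lemma~\ref{overlap-bound}), and you correctly identify that the real content lies in the boundary regimes. The problem is that your treatment of \emph{both} boundary regimes rests on the same false premise: that the single interval $Q_i$ carries positive deficiency, so that $\sum_k \fv{j_k}{Q_i}>0$ (your exclusion of $l=0$) and so that $\defi{Q_i}\le 0$ is a contradiction (your exclusion of $l\ge m$). What the procedure guarantees is that the \emph{set} $Q=\set{Q_1,\ldots,Q_k}$ has maximum (and, in the interesting case, positive) deficiency; this does not localize to its member intervals, because forced volume is superadditive across disjoint intervals. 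The paper's own example after the definition ($I_1=[0,3]$, $I_2=[5,8]$, a job with $r=2$, $d=6$, $p=3$ has $\fv{j_1}{I_1}=\fv{j_1}{I_2}=0$ but $\fv{j_1}{I_1\cup I_2}=1$) shows exactly this: a minimal maximum-deficiency set will typically consist of intervals none of which has any forced volume of its own, which is the very reason $Q$ is a set rather than a single interval. Indeed, in the paper's proof the hard case $c_i=0$ yields the \emph{opposite} of your premise: the convex-decomposition property gives $\sum_{I:I\cap Q_i\neq\varnothing}x_I<1$, and the LP ceiling constraint then forces $\sum_k\fv{j_k}{Q_i}=0$. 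So neither boundary exclusion goes through, and since the middle case is the easy part, the proposal has a genuine gap.

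The paper closes the hard case with ingredients that always refer to $Q$ as a whole, never to $Q_i$ alone, and it proves $c_i\ge 1$ rather than $l\ge 1$ (the latter is strictly stronger and not obviously true, since the extension step can make an interval of $\mcC'$ overlap $Q_i$ even when no interval of $\mcC$ does). Assuming $c_i=0$: minimality of $Q$ gives $\defi{Q\setminus Q_i}<\defi{Q}$, and since $Q_i$ contributes no supply when $c_i=0$, some job $j_k$ must satisfy $\fv{j_k}{Q\setminus Q_i}<\fv{j_k}{Q}$; because $\sum_k\fv{j_k}{Q_i}=0$, the window $[r_k,d_k]$ cannot lie inside $Q_i$, so it crosses an endpoint of $Q_i$; finally, because $Q$ has \emph{maximum} deficiency, enlarging $Q_i$ by the adjacent slot (which raises forced volume by 1) must also raise supply, i.e.\ some interval of $\mcC'$ contains that adjacent slot and hence, since touching intervals count as overlapping, overlaps $Q_i$ --- contradiction. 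With $c_i\ge1$ in hand, $l\le c_i-1$ is trivial for $l=0$ and follows from Claim~\ref{cl:adding}(3) for $0<l<m$, and Lemma~\ref{overlap-bound} finishes. A secondary soft spot: you invoke Claim~\ref{cl:adding}(4) to get $c_i\le m$, but that claim concerns $\mcC'$ \emph{before} the extension step, so you would additionally need to argue that extensions preserve the at-most-$m$ overlap property.
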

\begin{proof}
If $c_i=0$ then no interval in $\mcC'$ overlaps $Q_i=[a_i,b_i]$. Since in going from $\mcC$ to $\mcC'$ we have only extended intervals or introduced new intervals, this implies that no interval in $\mcC$ overlaps $[a_i,b_i]$. By our convex decomposition procedure this implies that $\sum_{I:I\cap[a_i,b_i]\neq\varnothing} x_I < 1$. Since $x$ is a feasible solution to the linear program (Section~\ref{sec:lp}) we conclude that $\sum_{k=1}^n \fv{j_k}{[a_i,b_i]}=0$.

Since $Q$ is a minimal set of intervals with maximum deficiency $\defi{Q\setminus Q_i} <\defi{Q}$. Since no interval of $\mcC'$ overlaps $Q_i$ this implies $\sum_{k=1}^n \fv{j_k}{Q\setminus Q_i} < \sum_{k=1}^n \fv{j_k}{Q}$. Hence there exists a job $j_k$ such that $\fv{j_k}{Q\setminus Q_i}<\fv{j_k}{Q}$. This implies that $[r_k,d_k]\cap Q_i\neq\varnothing$. Further $[r_k,d_k]\not\subseteq Q_i$ as that would imply $\fv{j_k}{Q_i}>0$. Hence either $r_k< a_i<d_k$ or $r_k<b_i<d_k$; note that both conditions could also be true. 

If $r_k< a_i<d_k$ then expanding $Q_i$ to $[a_i-1,b_i]$ would increase \fv{j_k}{Q} by 1. Since $Q$ is a set of intervals with maximum deficiency, some interval of $\mcC'$ must include the time-slot $[a_i-1,a_i]$. Similarly, if $r_k<b_i<d_k$ then by expanding $Q_i$ to $[a_i,b_i+1]$ we conclude that an interval of $\mcC'$ contains $[b_i,b_i+1]$. In either case, we have an interval of $\mcC'$ overlapping $[a_i,b_i]$ which implies $c_i>0$. 

By the third statement of Claim~\ref{cl:adding} the number of intervals in $\mcC$ overlapping $Q_i$ is at most $c_i-1$. Then by Lemma~\ref{overlap-bound} the number of intervals in $\mcC_j$ overlapping $Q_i$ is at most $c_i$ and this proves the lemma. 
\end{proof}

Consider $x$, the optimum solution to the LP. For all $t$ such that $[t,t+1]\subseteq Q_i$ we have, $$m_t=\sum_{I:[t,t+1]\subseteq I} x_I=\sum_{j=1}^r w_j\abs{\set{I\in\mcC_j, [t,t+1]\subseteq I}}\le \sum_{j=1}^r w_jc_i = c_i,$$ where the inequality follows from Lemma~\ref{le:key}.

Consider the cut \cut{S} where $S=\set{s}\cup\set{v_t|[t,t+1]\in Q}\cup\set{u_i| \fv{j_i}{Q}>0}$. The capacity of this cut is 
$$\sum_{t:[t,t+1]\subseteq Q} m_t + \sum_{i=1}^n (p_i-\fv{j_i}{Q}) \le P-\left(\sum_{i=1}^n \fv{j_i}{Q}-\sum_{i=1}^k c_i\abs{Q_i}\right)= P-\defi{Q},$$ where $\defi{Q}$ is the deficiency of the set of intervals $Q$ for an instance of \dsoi\ defined by intervals of $\mcC'$. If $\defi{Q}>0$ then $c(S)<P$ which contradicts the feasibility of $x$. Thus $\defi{Q}=0$ and so the intervals of $\mcC'$ form a feasible solution.

By Claim~\ref{cl:adding}, the total energy consumption of intervals in $\mcC'$ is at most twice that of the intervals in $\mcC$. Our procedure for extending intervals in $\mcC'$ increases their total length, and hence the total energy, by at most $P$. Hence the solution of minimum cost among $\mcC'_j, 1\le j\le r$ has cost at most $2\opt+P$ where $\opt$ is the cost of the optimum fractional solution.
\begin{theorem}
Given $n$ jobs with release dates, processing times and deadlines in $[0,D]$, there is an algorithm with running time polynomial in $n,D$ which schedules these jobs on $m$ machines such that the total energy consumption is at most $2\opt +P$ where $P$ is the sum of processing times. 
\end{theorem}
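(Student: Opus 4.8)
The plan is to assemble the machinery developed in this section into a single algorithm and then account for its cost and running time. First I would solve the linear program of Section~\ref{sec:lp}, obtaining an optimal fractional solution $x$ of cost $\opt$, and set $\mcI=\set{I\mid x_I>0}$. After eliminating strict containments and ordering $\mcI$ by start-times (ties broken by end-times), I would apply the round-robin convex decomposition from Section~\ref{sec:single} to write $x$ as a convex combination $\sum_{j=1}^r w_j\mcC_j$ of integral solutions, each of which has at most $m$ intervals overlapping at any time. It then suffices to convert each $\mcC_j$ into a feasible instance of \dsoi\ of cost at most $2\,\mathrm{cost}(\mcC_j)+P$, since selecting the cheapest resulting solution and using $\sum_j w_j=1$ yields cost at most $2\opt+P$.

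Fix one integral solution $\mcC=\mcC_j$. The core is the two-stage transformation. In the first stage I would build $\mcC'$ by the merging-and-copying rule and invoke Claim~\ref{cl:adding}: the total length and the number of intervals at most double, at most $m$ intervals of $\mcC'$ overlap at any time, and any window overlapping $0<l<m$ intervals of $\mcC$ overlaps at least $l+1$ intervals of $\mcC'$. This built-in slack is exactly what makes the subsequent extension feasible. In the second stage I would repeatedly run the flow-based feasibility test of Section~\ref{feasible} on the supply-intervals $\mcC'$; whenever it reports a minimal family $Q$ of maximum deficiency, I would extend some interval of $\mcC'$ that overlaps but does not contain a member $Q_i$ of $Q$ by one time-slot drawn from $Q_i$ (which drops the maximum deficiency by one), stopping when no such extension is possible.

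The heart of the proof, and the step I expect to be the main obstacle, is showing that the terminal $\mcC'$ is genuinely feasible rather than merely no longer extendable. Suppose not, and let $Q=\set{Q_1,\dots,Q_k}$ be the minimal maximum-deficiency family at termination, with $Q_i=[a_i,b_i]$ and $c_i$ the number of intervals of $\mcC'$ overlapping $Q_i$. Lemma~\ref{le:key} is the crux: using the minimality of $Q$ together with the \enquote{at least one overlapping interval} LP constraint, it shows that $c_i>0$ for every $i$ and that each fractional solution $\mcC_{j'}$ contributes at most $c_i$ intervals overlapping $Q_i$. Writing $m_t=\sum_{j'}w_{j'}\abs{\set{I\in\mcC_{j'}:[t,t+1]\subseteq I}}$ then gives $m_t\le c_i$ for every $[t,t+1]\subseteq Q_i$. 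I would then exhibit the cut $\cut{S}$ with $S=\set{s}\cup\set{v_t|[t,t+1]\in Q}\cup\set{u_i|\fv{j_i}{Q}>0}$ and, appealing to the accounting in Lemma~\ref{le:deficiency}, bound its capacity by
\[
\sum_{t:[t,t+1]\subseteq Q} m_t+\sum_{i=1}^n\bigl(p_i-\fv{j_i}{Q}\bigr)\le P-\defi{Q}.
\]
If $\defi{Q}>0$ this produces an $s$-$t$ cut of capacity below $P$, which contradicts the feasibility of the fractional solution $x$ via Lemma~\ref{le:fesible}; hence $\defi{Q}=0$ and $\mcC'$ is feasible.

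Finally I would settle cost and running time. By Claim~\ref{cl:adding} the energy of $\mcC'$ before extension is at most twice that of $\mcC$ (both the total length and the number of intervals at most double, and the wake-up term scales with the interval count), and the extension phase adds at most $P$ to the total length while creating no new intervals, so $\mathrm{cost}(\mcC')\le 2\,\mathrm{cost}(\mcC)+P$. Minimizing over the $r$ solutions then yields $2\opt+P$. The running time is polynomial in $n,D$: the linear program has polynomially many variables and constraints in these parameters, the decomposition produces $r\le\abs{\mcI}$ solutions, and each feasibility test together with each of the polynomially many extension steps is a single max-flow computation on the network $G$.
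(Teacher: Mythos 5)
Your proposal is correct and follows essentially the same route as the paper: the same LP and round-robin convex decomposition, the same merge-and-copy construction of $\mcC'$ with the properties of Claim~\ref{cl:adding}, the same deficiency-driven extension procedure, and the same argument via Lemma~\ref{le:key} and the cut \cut{S} to conclude that the terminal $\mcC'$ has zero deficiency and is hence feasible. Your cost accounting ($2\,\mathrm{cost}(\mcC_j)+P$ per solution, then taking the cheapest using $\sum_j w_j = 1$) and the running-time discussion likewise mirror the paper's proof.
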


\remove{
We show a $(3,2)$ approximation algorithm, i.e., we give an efficient algorithm to construct a schedule whose active (busy + idle)  and wake up energy is at most three times and two times the active and wake up energy of optimal schedule.

\textbf{Modifying the optimal intervals:}
Consider an optimal fractional solution with support $\mathcal{I}$ with weight $x_I > 0$ on interval $I$. Let $\epsilon = gcd(x_{I_1},\dots,x_{I_r})$ with $I_1,\dots,I_r  \in \mathcal{I}$. First we replace each interval $I$ (of weight $x_I$) with its $x_I/\epsilon$ copies each of weight $\epsilon$. This still gives a feasible schedule. Since $cost = \sum_{I \in \mathcal{I}} x_I(|I| + Q)$, the cost is exactly same as before.  From now on, we assume all the intervals in the support have now weight $\epsilon$. 

Next we do modification that will ensure  no interval is strictly contained in another.  
Consider two interval $I_1 = [a,b]$ and $I_2 = [c,d]$ from support. Let $a < c$ and $d < b$ so that $I_2$ is strictly contained in $I_1$.  We replace $I_1$ and $I_2$ with two new intervals $I_3 = [a,d]$ and $I_4 =[b,c]$ (each with weight $\epsilon$). Since for any set of time slots, 
the left hand side of constraints of our linear program don not change, so we still get optimal feasible schedule. 

\noindent
\textbf{Rounding:}
We now give an ordering among the intervals (of support). We order the intervals as per their release time. For the jobs with same release time, we order them as per their deadlines. For jobs with same release and deadline, we choose an arbitrary order among them. Since no interval is strictly contained in another, we get the same ordering if we order as per deadlines. 

We group the intervals into $\bar{N} = 1/\epsilon$ buckets. Each bucket will correspond to support of an (integral) schedule and we will choose a bucket with the minimum cost. As per the above ordering, let the intervals be $I_1 < I_2 < \dots $ . We group the intervals into buckets in round robin fashion as follows: the bucket $i$ ($1 \le i \le \bar{N}$) consists of intervals $\{I_i, I_{i+\bar{N}},I_{i+2\bar{N}},\dots\}$.    For any bucket $1 \le i \le \bar{N}$,  we  put the the intervals $I_{i+j\bar{N}}, I_{i+jt+m\bar{N}}, I_{i+jt+2m\bar{N}},\dots$ in machine $j+1$ for all $0 \le j \le m-1$. Note that this is possible since  $I_{k}  \cap I_{k+m\bar{N}} = \emptyset$ for any $k$. 

We say that an interval $I$ overlaps a duration $D$ if $I \cap D \neq \emptyset$.  We say that a bucket has overlap in machine $j$ with a duration if there is some interval in machine $j$ in the bucket that overlaps with the duration.

Ideally we would like  schedule jobs in intervals of optimal fractional solution in such a way that each job is processed in $p_j$ time slots in every bucket (so each bucket will give an integral solution).  However for some or all buckets this may not be possible. We show that by adding  and extending few intervals (we increase the cost by at most $3$ factor),  we can get integral schedule $\mcS_i$ corresponding to each bucket $i$. This implies that $\epsilon(cost(\mathcal{S}_1)+\dots+cost(\mathcal{S}_{\bar{N}})) \le 3cost(OPT^f) $ and hence  $\min_{i \in [\bar{N}]}cost(\mathcal{S}_i) \le 3 cost(OPT^f)$. Below we explain how we add and extend the intervals. This happens in two steps. In first step, we add intervals to make sure that in any duration every bucket will overlap at least $\epsilon$ fraction of machines compared to what $OPT^f$ is overlapping. In next step we extend the already existing intervals so that there are sufficient available volume to accomodate jobs. 
Let after the above modification and rounding step, let the buckets be $\mathcal{B}_1,\dots,\mathcal{B}_t$.

\noindent
\textbf{Adding new intervals - Step 1:}
Fix a bucket $i$. Let $\mu^i_j(t)$ be one if in bucket $i$, the duration $[t,t+1]$ is part of interval of $j$th machine and zero otherwise.  Let $\mu^i(t) = \sum_{j \in [m]}\mu^i_j(t)$ be the total number of machines which are active  in time slot $t$ in the bucket $i$.  Let $M^i(t)$ be the highest numbered machine $j$ such that $\mu^i_j(t) > 0$ in time slot $t$. 

For each time slot $t$, if $0 < \mu^i(t) < m$, we include $[t,t+1]$ in support of machine $(M^i(t) + 1) mod m$ in bucket $i$. We show that after this procedure, cost will increase by at most two. We denote the bucket after the above procedure by $\mathcal{B}^1_1,\dots,\mathcal{B}^1_t$.

\begin{lemma}
$cost(\mathcal{B}^1_k) \le 2 cost (\mathcal{B}_k)$ for all $1 \le k \le \bar{N}$.
\end{lemma}
\begin{proof}
Let $t_1,t_2$ be such that $\mu^i(t_1) = \mu^i(t_2) = 0$ and for any $t_1 +1 \le t \le t_2 - 1$, $\mu^i(t) > 0$. Let the intervals, in order,  in the the duration $[t_1,t_2]$ starts from machine $p-1$ and ends on machine $p+s-1$, i.e., let the intervals be $I_{i+p\bar{N}} = I_1 < I_{i+(p+1)\bar{N}} = I_2 < \dots < I_{i+(p+s)\bar{N}} = I_r$. 

Now for all $2 \le k \le r$, the interval $I_k$ will expand to $[s(I_{k-1}),e(I_k)]$ (if $s(I_{k-1}) = s(I_k)$ then it will remain same) while $[s(I_r),e(I_r)]$ will be added in the support of machine $p+s$. The total increase in length is $t_2 -t_1 -1$ which is $\le$ total length of initial support (as for all $t_1 +1 \le t \le t_2 - 1$, $\mu^i(t) > 0$).  Also total increase in number of interval is just one. So the cost increases by at most two factor. 
\end{proof}
\begin{lemma}
\label{suff-overlap}
    If for any $1\le k \le \bar{N}$,  $\mathcal{B}^1_k$ has overlap with $l > 0$ machines in any duration then for any $1 \le k' \le \bar{N}$, $\mathcal{B}_{k'}$ can have overlap with $\le l$ machine (in the duration)

\end{lemma}
\begin{proof}
If $l = m$ then there is nothing to show. Let $ l < m$.   Since $l > 0$,  $\mathcal{B}_k$ must have overlap with $l-1$ machines. But then by Lemma \ref{overlap-bound}, any $\mathcal{B}_{k'}$ can not have overlap with $l+1$ machine (in the duration). 
\end{proof}

\noindent
\textbf{Expansion of intervals - Step 2:}  In this step we will extend the intervals but ensure that new interval is not added, i.e., wake up cost is not increased.
We now describe our procedure. Since the procedure is same for each bucket, we avoid using superscript to denote the specific bucket. We now give some definitions. Let $R = R_1 \cup \dots \cup R_s$ such that each $R_i$ is a duration and $R_i$ and $R_j$ are disjoint for different $i$ and $j$. Let  $l_k$ be the number of machines where intervals of the bucket overlaps with the duration $R_k$. We say $R_k$ is empty if $l_k = 0$. We say $R_k$ is maximal if $l_k > 0$ and on all these $l_k$ machines, intervals completely overlap the duration $R_k$ (i.e., for all $t \in R_k,  \mu(t) = l_k)$. 
If $R_k$ is neither empty nor maximal then we say $R_k$ is extendable. Note that if $R_k$ is extendable then  there exists $t-1,t \in R_k$  such that for some machine $j$, $\mu_j(t) = 0$ but $\mu_j(t-1) = 1$ (or $\mu_j(t -1) = 0$ but $\mu_j(t) = 1$), i.e., there is some interval $I$ in machine $j$  with start or endpoint  strictly in $R_k$ and hence can be extended to strictly increase $\sum_{t \in R_k}\mu(t)$.

We consider the Interval Scheduling problem for each bucket.  We  construct the flow network $N$ as explained before. If the maximum flow in this network is $\ge P$ then then we can schedule the jobs in the intervals of the bucket and we are done. So we assume that maximum flow in the network is $P - c$ where $0 < c \le P$.  By Lemma \ref{mincut}, there exists a set of edges $(v_{t_{1}},T),(v_{t_{2}},T),\ldots,(v_{t_{k}},T)$ in $N$ such that if capacity of any one of them is increased by 1, maximum flow in the network increases by $1$. Let $R = \{R_1,\dots,R_s\}$ be the disjoint union of durations such that $\cup_{i =1}^{s} R_i = \cup_{i =1}^{k}t_i$.  Note that capacity of any edge $(v_{t_{s}},T)$ ($1 \le s \le k$) in flow network $N$ is equal to $\mu(t_s)$. If some $R_k$ in $R$ is extendable then as explained before some interval $I$ can be extended by unit size to increase $\mu(t_s)$ by $1$ for some $t_s$. In the flow network, this translates to increasing the capacity of the edge $(v_{t_{s}},T)$ by one. By Lemma \ref{mincut}, this increases the maximum flow by one.  So we now assume each $R_k$ is either maximal or empty. 
\begin{claim}
If  any  $R_k = [a,b]$ is empty then there must be an interval in some machine that ends at time  $a$ or starts at time  $b$. 
\end{claim}
\begin{proof}
We first show that there must exist a job $j$ with forced volume $F^j(R) > 0$ and that crosses $R_k$, i.e., $r_j < a$ and $d_j > a$ or $r_j < b$ and $d_j > b$. Suppose this is not true.  Note that there is no job $j'$ such that $[r_{j'},d_{j'}] \subseteq R_k$. If this is true then $l_k > 0$ (constraint 10). So for all job $j$ with $F^j(R) > 0$ we have $d_j \le a$ or $r_j \ge b$. This means that $F(R\setminus R_k) = F(R)$ and hence $F(R\setminus R_k)- P(R) = F(R)  - P(R)$ (as $R_k$ is empty). This contradicts that $R$ is the least cardinality set of durations minimizing $F(R') - P(R')$ over all $R' \subseteq [D]$.

Consider any job $j$ with forced volume $F^j(R) > 0$ and that crosses $R_k$, i.e., $r_j < a$ and $d_j > a$ or $r_j < b$ and $d_j > b$.  For the first case, consider the set of durations $R \cup [a-1,a]$. By the definition of the forced volume, we have $F^j(R \cup [a-1,a]) \ge F^j(R)$. However since  $F^j(R) > 0$ we have $F^j(R \cup [a-1,a]) = F^j(R) + 1$. Therefore  $F(R \cup [a-1,a]) > F(R)$.  Since $R$ minimizes $P(R') - F(R')$ over all $R' \subseteq [D]$, it must be the case that $P(R \cup [a-1,a]) > P(R)$. This will happen exactly when there is an interval that ends at time $a$. For the other case ($r_j < b$ and $d_j > b$), it can be similarly shown that there is an interval that starts time $b$. 
\end{proof}
Because of the above claim, if some $R_k$ is empty then like extendable case, we can extend a interval to increase the capacity and hence maximum flow by one. So now assume that all $R_k$ is full. We  claim that this will imply that $OPT^f$ is infeasible (we prove it in next paragraph) and hence can not happen. We repeat the above procedure till maximum flow value is $P$ (so we construct a new flow network with increased capacity in every step). 
Note that total increase in length of the intervals is $c \le P$.

 We now prove the claim at any step of our procedure. We have $P(R) = \sum_{i =1}^{s} l_i|R_i| < F(R)$. By Lemma \ref{suff-overlap}, every bucket $\mcB_s$ ($ 1 \le s \le \bar{N}$), in any duration $R_i$, will overlap with $\le l_i$ machines. Therefore, available processing volume on $R$ in $OPT^f$ is $P^f(R) \le \bar{N} \epsilon \sum_{i =1}^{s} l_i|R_i| = \sum_{i =1}^{s} l_i|R_i| <F(R)$. 

Let the buckets after step $2$ be denoted by $\mcS_k$ ($1 \le k \le \bar{N}$). It is now possible to schedule the jobs in each $\mcS_k$. Note that $cost(\mcS_k) \le cost(\mcB^1_k) + P \le 3 cost(\mcB_k)$. Therefore, $\epsilon(cost(\mathcal{S}_1)+\dots+cost(\mathcal{S}_{\bar{N}})) \le 3cost(OPT^f) $ and hence  $\min_{i \in [\bar{N}]}cost(\mathcal{S}_i) \le 3 cost(OPT^f)$.

}
\section{Conclusions}
The two algorithms with running times polynomial in $n$ and $D$ can be converted to polynomial time algorithms by limiting the number of intervals we consider in the linear program and by suitably modifying our procedure for extending the intervals in the integral solutions of the convex decomposition (see Appendix). We believe that our approach of formulating this problem of minimizing energy as a linear program and the tools we develop in this paper for rounding the fractional solutions, hold much promise and can be applied to more general machine models and power management techniques.  
\bibliographystyle{plain}
\bibliography{references}

\begin{thebibliography}{10}

\bibitem{greenpeace}
{How Dirty is your Data?}
\newblock
  \url{https://www.greenpeace.org/international/publication/7196/how-dirty-is-your-data/},
  2011.

\bibitem{Albers17spaa}
Susanne Albers.
\newblock On energy conservation in data centers.
\newblock In {\em Proceedings of the 29th {ACM} Symposium on Parallelism in
  Algorithms and Architectures, {SPAA} 2017, Washington DC, USA, July 24-26,
  2017}, pages 35--44, 2017.

\bibitem{AlbersA14racetoidle1}
Susanne Albers and Antonios Antoniadis.
\newblock Race to idle: New algorithms for speed scaling with a sleep state.
\newblock {\em {ACM} Trans. Algorithms}, 10(2):9:1--9:31, 2014.

\bibitem{AntoniadisHO15racetoidle2}
Antonios Antoniadis, Chien{-}Chung Huang, and Sebastian Ott.
\newblock A fully polynomial-time approximation scheme for speed scaling with
  sleep state.
\newblock In {\em Proceedings of the Twenty-Sixth Annual {ACM-SIAM} Symposium
  on Discrete Algorithms, {SODA} 2015, San Diego, CA, USA, January 4-6, 2015},
  pages 1102--1113, 2015.

\bibitem{augustineIS08}
John Augustine, Sandy Irani, and Chaitanya Swamy.
\newblock Optimal power-down strategies.
\newblock {\em {SIAM} J. Comput.}, 37(5):1499--1516, 2008.

\bibitem{baptiste2006scheduling}
Philippe Baptiste.
\newblock Scheduling unit tasks to minimize the number of idle periods: a
  polynomial time algorithm for offline dynamic power management.
\newblock In {\em Proceedings of the seventeenth annual ACM-SIAM symposium on
  Discrete algorithm}, pages 364--367. Society for Industrial and Applied
  Mathematics, 2006.

\bibitem{baptiste2007polynomial}
Philippe Baptiste, Marek Chrobak, and Christoph D{\"u}rr.
\newblock Polynomial time algorithms for minimum energy scheduling.
\newblock In {\em European Symposium on Algorithms}, pages 136--150. Springer,
  2007.

\bibitem{chrobak2017greedy}
Marek Chrobak, Uriel Feige, Mohammad~Taghi Hajiaghayi, Sanjeev Khanna, Fei Li,
  and Seffi Naor.
\newblock A greedy approximation algorithm for minimum-gap scheduling.
\newblock {\em Journal of Scheduling}, 20(3):279--292, 2017.

\bibitem{demaine2007scheduling}
Erik~D Demaine, Mohammad Ghodsi, Mohammad~Taghi Hajiaghayi, Amin~S
  Sayedi-Roshkhar, and Morteza Zadimoghaddam.
\newblock Scheduling to minimize gaps and power consumption.
\newblock In {\em Proceedings of the nineteenth annual ACM symposium on
  Parallel algorithms and architectures}, pages 46--54. ACM, 2007.

\bibitem{DemaineZ10}
Erik~D. Demaine and Morteza Zadimoghaddam.
\newblock Scheduling to minimize power consumption using submodular functions.
\newblock In {\em {SPAA} 2010: Proceedings of the 22nd Annual {ACM} Symposium
  on Parallelism in Algorithms and Architectures, Thira, Santorini, Greece,
  June 13-15, 2010}, pages 21--29, 2010.

\bibitem{IraniP05}
Sandy Irani and Kirk Pruhs.
\newblock Algorithmic problems in power management.
\newblock {\em {SIGACT} News}, 36(2):63--76, 2005.

\bibitem{iraniSGtalg07}
Sandy Irani, Sandeep~K. Shukla, and Rajesh Gupta.
\newblock Algorithms for power savings.
\newblock {\em {ACM} Trans. Algorithms}, 3(4):41, 2007.

\bibitem{IraniSG03multipowerdown}
Sandy Irani, Sandeep~K. Shukla, and Rajesh~K. Gupta.
\newblock Online strategies for dynamic power management in systems with
  multiple power-saving states.
\newblock {\em {ACM} Trans. Embedded Comput. Syst.}, 2(3):325--346, 2003.

\bibitem{KumarS15a}
Gunjan Kumar and Saswata Shannigrahi.
\newblock On the {NP}-hardness of speed scaling with sleep state.
\newblock {\em Theor. Comput. Sci.}, 600:1--10, 2015.

\end{thebibliography}
\appendix
\section*{Appendix}
\section{From Pseudopolynomial to Polynomial Time}
\label{sec:running}
In this section we prove that it is sufficient to limit ourselves to intervals with start and endpoints from a set $W$ of polynomially many time-slots in $[0,D]$, with the loss of a small factor in the approximation ratio.

\begin{definition}
Let $T:=\cup_i\{r_i,d_i\}$ and $W:=T\cup \{w| w\in [0,D], \exists t\in T, k\in\mathbb{N} \text{ with } |t-w|=\lceil (1+\epsilon)^k\rceil\}\cup\{0,D\}$.
\end{definition}

\begin{clm}
    \label{clm:number-points}
    $|W|$ is polynomial in the input size, and therefore so is the number of possible intervals that start and end at time-slots of $W$.
\end{clm}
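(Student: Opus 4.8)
The plan is to bound $|W|$ by a direct counting argument and then observe that the number of candidate intervals is quadratic in $|W|$. First I would note that $|T|\le 2n$, since $T$ consists only of the $n$ release dates and the $n$ deadlines. The only part of $W$ that could conceivably be large is the set of points $w$ for which $|t-w|=\lceil(1+\epsilon)^k\rceil$ for some $t\in T$ and $k\in\mathbb{N}$; the remaining contributions ($T$ itself together with $\{0,D\}$) are of size $O(n)$, so it suffices to count these ``geometrically spaced'' points.

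Fix $t \in T$. For such a point $w$ to lie in $[0,D]$ we must have $|t-w|\le D$, and hence $\lceil(1+\epsilon)^k\rceil\le D$, which forces $(1+\epsilon)^k\le D$, i.e. $k\le \log_{1+\epsilon} D$. Using $\ln(1+\epsilon)\ge \epsilon/2$ for $\epsilon\in(0,1]$, this gives that the number of admissible $k$ is $O\!\left(\frac{\log D}{\epsilon}\right)$. Each admissible value of $k$ contributes at most two points to $W$, namely $t+\lceil(1+\epsilon)^k\rceil$ and $t-\lceil(1+\epsilon)^k\rceil$. Summing over the at most $2n$ choices of $t$ yields
\[
|W| = O\!\left(\frac{n\log D}{\epsilon}\right).
\]
Since the instance encodes $D$ in binary, $\log D$ is polynomial in the input size and $1/\epsilon$ is a parameter of the algorithm; hence $|W|$ is polynomial. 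The number of intervals whose start- and end-points both lie in $W$ is at most $\binom{|W|}{2}=O(|W|^2)$, which is then also polynomial, which proves the claim.

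The only point requiring care --- and the reason the statement is not entirely trivial --- is that $D$ can be exponentially large relative to the input, since we are in the pseudo-polynomial regime. A naive candidate set, such as every integer time-slot, would have size $\Theta(D)$ and hence be exponential. The geometric spacing built into the definition of $W$ is precisely what collapses the per-$t$ count from $\Theta(D)$ to $O(\log D/\epsilon)$, and this is the step I would emphasize; everything else is routine. (In the subsequent argument one must of course also verify that restricting to intervals with endpoints in $W$ costs only a $(1+\epsilon)$ factor in the approximation, but that is a separate statement and not part of this cardinality claim.)
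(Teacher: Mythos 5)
Your proof is correct and follows essentially the same argument as the paper: fix $t\in T$, use $|t-w|\le D$ to bound the number of admissible exponents $k$ by $O(\log_{1+\epsilon} D)$, multiply by $|T|=O(n)$, and square for the interval count. The only difference is that you track the $1/\epsilon$ factor explicitly (giving $|W|=O(n\log D/\epsilon)$), whereas the paper absorbs it into the big-$O$; your accounting is in fact the more precise one given that the paper ultimately claims running time polynomial in $n$ and $1/\epsilon$.
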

\begin{proof}
    Consider some $t\in T$.  We argue about the number of distinct $w\in[0,D]$, so that $|t-w|=\lceil (1+\epsilon)^k\rceil$ for some $k\in\mathbb{N}$. Since $t,w\in[0,D]$, we have that $|t-w|\le D+1$ and therefore $k=O(\log D)$. In turn $|W|=|T|\cdot k = O(n\log D)$,
    and the number of possible intervals starting and ending at $W$ is $|W|^2= O(n^2\log^{2} D)$.
\end{proof}

\begin{lemma}
\label{lem:running-time}
    Considering only intervals that start and end at time-slots of W, does not  increase the cost of being in the active state by more than a factor of $(1+\epsilon)$.
\end{lemma}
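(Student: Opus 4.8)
The plan is to take an optimal solution of the linear program of Section~\ref{sec:lp} and transform it, at a multiplicative cost of at most $(1+\epsilon)$ in the active-state energy $\sum_I x(I)\abs{I}$, into a feasible solution supported only on intervals whose endpoints lie in $W$. I would carry this out in two stages: a \emph{consolidation} stage that makes every active interval abut a point of $T$, followed by a \emph{geometric snapping} stage that rounds each such interval to $W$ using the definition of $W$.

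\emph{Consolidation.} Call a maximal subinterval $[t_p,t_n]\subseteq[0,D]$ containing no point of $T$ in its interior a \emph{gap}; its endpoints are consecutive elements of $T\cup\{0,D\}$. The crucial observation is that any job $j_i$ that can be processed inside a gap satisfies $r_i\le t_p$ and $d_i\ge t_n$, i.e. it is available throughout the \emph{entire} gap, simply because no release time or deadline lies strictly inside a gap. Consequently, by the deficiency/forced-volume characterization of \dsoi\ feasibility (``an instance is feasible iff no set of disjoint intervals has positive deficiency''), feasibility inside a gap depends only on the total amount of active supply present in the gap and not on its temporal distribution. I would therefore reorganize, independently within each gap, the active profile $a(t)=m_t=\sum_{I:\,[t,t+1]\subseteq I}x(I)$ so that it is non-increasing away from $t_p$ (``flush-left''), keeping $\int a(t)\,dt$ over the gap fixed and $a(t)\le m$ throughout. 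This preserves the total active time, the bound $m_t\le m$, and feasibility of the whole \dsoi\ instance. Decomposing the resulting staircase profile into intervals, every active interval now lies inside a single gap and has its left endpoint equal to $t_p\in T\subseteq W$.

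\emph{Geometric snapping.} Each interval now has the form $[t_p,\,t_p+\ell]$ with $t_p\in T$. I keep the left endpoint and round the length up to the grid: let $k$ be least with $\lceil(1+\epsilon)^k\rceil\ge\ell$ and replace the interval by $[t_p,\,t_p+\min(\lceil(1+\epsilon)^k\rceil,\,t_n-t_p)]$, capping at the gap so the interval cannot spill into a neighbouring gap and push the parallelism above $m$. Both endpoints then lie in $W$, since $t_p\in T$ and $t_p+\lceil(1+\epsilon)^k\rceil$ is at distance $\lceil(1+\epsilon)^k\rceil$ from the point $t_p\in T$. To bound the blow-up I would split on the size of $\ell$: for $\ell>1/\epsilon$, minimality of $k$ gives $(1+\epsilon)^{k-1}<\ell$, whence $\lceil(1+\epsilon)^k\rceil\le(1+\epsilon)\ell+1\le(1+2\epsilon)\ell$; for $\ell\le1/\epsilon$ the grid $\{\lceil(1+\epsilon)^k\rceil\}$ is unit-dense (consecutive grid values differ by at most $1$ while $(1+\epsilon)^k\le1/\epsilon$), so it already contains the integer $\ell$ and no rounding occurs. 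Thus each interval grows by a factor at most $(1+2\epsilon)$; since snapping only enlarges intervals it can only increase overlaps and the supply $m_t$, so the forced-volume lower-bound constraints and \dsoi\ feasibility are preserved. Summing over all intervals, the active-state energy increases by a factor at most $(1+2\epsilon)$, and rescaling $\epsilon$ by a constant factor yields the claimed $(1+\epsilon)$ bound.

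\emph{Main obstacle.} The delicate step is consolidation: I must justify carefully that flushing the active profile within each gap preserves feasibility and the bound $m_t\le m$, which rests precisely on the fact that every job usable in a gap is available for the gap's whole length, combined with the deficiency characterization of \dsoi. A secondary technical point is the additive $+1$ coming from the ceilings, handled above via unit-density of the geometric grid for small lengths. Finally I would remark that consolidation may fragment an interval that spanned a gap boundary, increasing the number of active intervals; this affects only the wake-up energy and not the active-state energy that the present lemma concerns, and so is immaterial here.
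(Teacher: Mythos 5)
Your proof breaks down at the consolidation stage, in two distinct ways. First, the claim you lean on --- that \enquote{feasibility inside a gap depends only on the total amount of active supply present in the gap and not on its temporal distribution} --- is false, because a job cannot run in parallel to itself. Concretely, with $m=5$ machines and a single job of processing time $5$ available throughout a gap, a supply of one unit in each of five slots is feasible, whereas the same total supply concentrated as five units in one slot is not; both profiles are non-increasing, have the same integral, and respect $a(t)\le m$, so the flush-left operation as you actually specify it can destroy feasibility. This particular problem is repairable if you insist on the \emph{decreasing rearrangement} of the profile (which preserves $\sum_t\min(m_t,k)$ for every $k$, and hence preserves flow feasibility by Hall's condition, using that every job processable in a gap is available for its whole length), but you never make that argument. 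Second --- and this is not repairable --- the consolidated solution need not satisfy the fifth family of LP constraints, $\sum_{I:[a,b]\cap I\neq\varnothing} x_I \ge \lceil\sum_i \mbox{\tt fv($j_i,[a,b]$)}/(b-a)\rceil$. Take a single machine, one job with $r_1=0$, $d_1=10$, $p_1=2$ (so $T=\{0,10\}$ and $[0,10]$ is one gap), and the LP-feasible solution $x_{[0,2]}=x_{[8,10]}=1/2$. Flush-left turns this into $x_{[0,4]}=1/2$; for $[a,b]=[0,9]$ the forced volume is $1$, so the right-hand side is $\lceil 1/9\rceil=1$, while the overlapping weight is only $1/2$. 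So your transformed solution is not feasible for the $W$-restricted LP at all, and the lemma's conclusion about the cost of that LP does not follow. These ceiling constraints are not cosmetic: they are exactly what the rounding argument (e.g.\ Lemma~\ref{le:key}) relies on.

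The paper avoids gap-by-gap surgery entirely. It takes an optimal \emph{schedule}, shows by a local moving/job-swapping argument (Claim~\ref{clm:contains-point}) that every active interval may be assumed to contain a point of $T$, and then grows each interval's two endpoints outward until each hits a point of $W$ or collides with another interval on the same processor, merging in the latter case. Since $W$ contains points at every distance $\lceil(1+\epsilon)^k\rceil$ on either side of the contained $T$-point, each interval's length grows by at most a $(1+\epsilon)$ factor, and --- crucially --- the number of intervals never increases, so the wake-up energy is untouched. Your construction instead cuts intervals at every point of $T$ and re-decomposes the profile into staircases, which can multiply the number of intervals (hence the wake-up energy) by a factor of order $|T|$. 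You dismiss this because the lemma's wording mentions only active-state cost, but the lemma is used to bound the total cost of the $W$-restricted LP against \opt, so a proof that sacrifices the wake-up cost cannot play that role. Finally, transforming the fractional LP optimum rather than an optimal schedule is legitimate in principle, but it is precisely what makes satisfying \emph{every} LP constraint (including the ceiling constraints above) mandatory, and that is where the argument fails; your geometric snapping stage, by contrast, is essentially sound and matches the paper's use of the grid $W$.
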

\begin{proof}
    Consider an optimal solution \opt. We will transform \opt\ to a solution that satisfies the lemma property while increasing its active cost by at most an $(1+\epsilon)$ factor. We can associate intervals in \opt\ with processors as follows. Recall that the intervals are ordered by their start-time and ties are broken by end-times. Go through the intervals in this order and associate each interval to the smallest-index processor so that it does not overlap with any other interval already there. We will use the following claim to prove the lemma:
    \begin{clm}
        \label{clm:contains-point}
        Assuming that $I\cap T\neq\varnothing$ holds for any interval $I\in\opt $  is without loss of generality.
    \end{clm}
        Consider some interval $I\in\opt$, associated with a processor $m_I$ and let $t\in I$ be a time slot of $T$ whose existence is guaranteed by Claim~\ref{clm:contains-point}.
    
     We will expand $I$ towards the left and the right respectively until we hit either some time slot in $W$ or we hit another interval associated with this processor. In the second case we merge the two intervals. We repeat this for every interval, and the process will terminate since in each step we either "snap" one of the endpoints to a point in $W$ or reduce the number of intervals by one. Note that eventually all interval endpoints will be slots in $W$ ($W$ includes $0$ and $D$).
     
     The total increase in length of an interval $I$ is at most $(1+\epsilon)\cdot |I|$, because we expand towards the left by at most a factor $(1+\epsilon)\cdot |t-s_I|$, and similarly towards the right by at most a factor $(1+\epsilon)\cdot |e_I-t|$. This is because by construction there are points in $W$ at every $(1+\epsilon)$ multiple distance away from $t$, and we never expand more than that.
        
        We conclude the proof of the lemma by proving Claim~\ref{clm:contains-point}.
    \begin{proof}[Proof of Claim~\ref{clm:contains-point}]
    Assume for the sake of contradiction that there exists an $I\in\opt$ such that $I\cap T=\varnothing$. Then we  move $I$ towards an adjacent point $t\in T$. Without loss of generality assume that we move $I$ leftwards. So consider moving $I$ leftwards one slot at a time. We break up this moving of $I$ one slot leftwards into consecutively moving all units of $I$ one slot leftwards: We first move the leftmost unit, then the next one etc. The following could potentially happen:
    \begin{itemize}
        \item Interval $I$ reaches $t$. In this case $I\cap T \neq\varnothing$ and we stop.
        \item Interval $I$ meets the endpoint $e_{I}$ of some other interval $I'$ on the same or a different processor. This cannot happen since it would contradict the optimality of \opt. The reason is that one can either merge $I$ with $I'$, or use part of $I$ to close the gap following $I'$ on its processor. Either requires one wake-up operation less but has otherwise identical costs to \opt.
        \item We are not able to move some unit of $I$ one more slot leftwards without producing an infeasible schedule. Since there is still no point in $T$ intersecting $I$ this must be because some job $j$ running in this unit of $I$ would run in parallel to itself if we move the interval one more slot leftwards. Let $\ell$ be the slot on which $j$ runs in $I$, and assume that it runs in some slot $\ell-1$ on some other processor. If there is some interval $I'$ on one of the other processors ending at slot $\ell-1$, we simply move the unit of $j$ to that processor continue shifting the remaining slots of $I$ to the left. Thus we may assume that slot $\ell-1$ contains strictly less jobs than slot $\ell$. By the pigeon hole principle there exists some job that we can swap with $j$ in slot $\ell$ so that we can move one more unit of $I$ one slot leftwards. 
    \end{itemize}
    Since in each step we move one unit of $I$ one slot leftwards, the process will eventually terminate with $I\cap T\neq \varnothing$. Note that the process does not increase the number of intervals, nor the sum of interval lengths (although it may change individual interval lengths), and therefore does not affect the cost of the solution. 
    \end{proof}
\end{proof}

\textbf{Modifying the Flow Network and Linear Program.}  We first show how to modify the network for checking the feasibility of \dsoi. Let $W=\{a_{0},a_{1},\ldots, a_{k}\}$, with $a_{0} < a_{1} < \ldots < a_{k}$. The consecutive points in $W$ partition $[0,D]$ into $k$ time intervals, ie. $I_{W}=\{[a_{0},a_{1}],\ldots,[a_{k-1}a_{k}]\}$. We refer to the interval $[a_{k-1},a_{k}]$ as the $k^{th}$ time slot. We next discuss how to adapt the maximum flow formulation. Firstly, instead of nodes $v_{t}$ for each time $t, 1 \leq t \leq D-1$, we now have a node $v_{t}, 1\leq t \leq k$ for each time slot in $I_{W}$. The capacity of edge $(u_{i},v_{t})$ is the length of interval $[a_{t-1},a_{t}]$. Let $n_t$ be the number of intervals crossing time slot $t$. The capacity of edge $(v_{i},t)$ is $m_t$, where $m_t$ is defined as  the product of $n_t$ and the length of time slot $t$. Note that size of the network after doing the above modification is $O(n|W|)$. As in Lemma \ref{le:fesible}, we can again argue that the given instance is feasible iff $P$ units of flow can be routed in the network. If the instance is feasible, then $P$ units of flow can clearly be routed. Suppose $P$ units of flow can be routed in the network. Fix a time slot $t$. We have to schedule $f(i,t)$ units of job $i$ in the $t^{th}$ time slot such that $f(i,t) \leq |a_{t}-a_{t-1}|$ and $\sum_{i}f(i,t) \leq m_{t}=n_{t}|a_{t}-a_{t-1}|$. Consider a schedule of all jobs (active in time slot $t$) on a single machine such that job $i$ is processed for $f(i,t)$ units, every job is processed contiguously and there is no gap in the schedule. The machine runs continuously in $[0,\sum_{i}f(i,t)]$. We replicate the schedule of this machine in time  $[(i-1)|a_{t}-a_{t-1}|,i|a_{t}-a_{t-1}|]$ on the $i^{th}$ interval crossing time slot $t$. No job is processed in two intervals at the same time as no job has length more than $|a_{t}-a_{t-1}|$. We modify appropriate constraints in the Linear Program to reflect changes made in the network.\\

\textbf{Modifying the Rounding Procedure.}
We now argue that the rounding procedure of Section \ref{sec:multiple} can be carried out in polynomial time. The algorithm works in iterations. In each iteration, the rounding procedure finds a minimal set of intervals of maximum deficiency and increases the length of an interval in this set by 1. This results in reduction of maximum deficiency by 1 and there can be at most $P$ such iterations. We make the following minor modification to this algorithm. If we decide to extend an interval $I$ of some solution $\mcC_j$ in an iteration, we extend it by $\delta$, where $\delta$ is the maximum number such that extending $I$ by $\delta$ also reduces the maximum deficiency of this solution by $\delta$. We can find such a $\delta$ by binary search. Recall that minimal maximum deficiency set shrinks after every iteration. Suppose $Q$ is the minimal maximum deficiency set after $I$ was extended by $\delta$. Since $I$ was not extended any further (in a previous iteration), either it does not overlap with $Q$ or none of the endpoints of $I$ are inside $Q$. In either case, this interval will never be extended in any further iteration. Hence, the total number of iterations is bounded by the maximum number of intervals in a solution, which is $O(m|W|)$. Since total number of solutions is at most the number of possible intervals, the total number of iterations required for constructing all the solutions is at most $O(m|W|^{3})$. Also, the total length of intervals added to a solution is equal to the maximum deficiency, which is at most $P$ and hence the rounding procedure does not further affect the approximation guarantee of the algorithm. After extending the intervals, each solution has a maximum deficiency of zero and hence feasible (by discussion in the last section).

\section{Integrality Gap Example}
Consider an instance on a single machine with 5 jobs, $j_{1},\ldots,j_{5}$ (see Figure \ref{fig:bad-example}). Let $r_{1}=0,d_{1}=1,r_{2}=1,d_{2}=7,r_{3}=2,d_{3}=4,r_{4}=4,d_{4}=6,r_{5}=7,d_{5}=8$. All jobs have unit processing time and the wake up cost of the machine is $1$. Since the wake up cost is $1$ we may assume that the machine transitions to the sleep state whenever it is idle, in other words there exists an optimal (integral) solution with no active but idle periods. We claim that the aforementioned instance requires at least three contiguous active time intervals: First note that $j_{1}$ and $j_{5}$ have to be done in time slots $[0,1]$ and $[7,8]$ respectively and since there are only three units of work to be done in $[1,7]$, $j_1$ and $j_5$ must be processed in two different intervals. Let $I_{1}$ and $I_{2}$ be these respective intervals. If $j_{2}$ is processed in $I_{1}$, then $j_{4}$ cannot be processed in $I_{1}$ or $I_{2}$. Similarly, if $j_{2}$ is processed in $I_{2}$, then $j_{3}$ cannot be processed in $I_{1}$ or $I_{2}$. Hence, optimal solution must incur wake up energy of at least $3$ and total energy of optimal solution is at least $8$.

We now show a fractional solution with value strictly smaller than 8. Let $I_{1}=[0,1],I_{2}=[0,3],I_{3}=[4,6],I_{4}=[5,8],I_{5}=[7,8]$ (see Figure \ref{fig:bad-example}). Consider a fractional solution with $x_{I_{1}}=x_{I_{2}}=x_{I_{3}}=x_{I_{4}}=x_{I_{5}}=1/2$. It can easily be verified that this is a feasible fractional solution with energy 15/2. Hence, integrality gap of the LP is at least $16/15$. 
\end{document}